\lstdefinelanguage{algo}{%
  morekeywords={function,procedure,algorithm,push,pop,top,for,all,and,or,if,then,else,repeat,until,while,do,report,return,such,that,each,add,call,exit,let}
}
\newcommand{\Lnew}{L_{new}}
\newcommand{\Unew}{U_{new}}
\newcommand{\alu}{\mathfrak{a}_{\fleq \scriptscriptstyle{LU} }}
\newcommand{\aluP}{\mathfrak{a}_{\fleq \scriptscriptstyle{L'U'} }}
\newcommand{\aluOne}{\mathfrak{a}_{\fleq \scriptscriptstyle{L_1U_1} }}
\newcommand{\aluOneP}{\mathfrak{a}_{\fleq \scriptscriptstyle{L_1'U_1'} }}
\newcommand{\aluTwoP}{\mathfrak{a}_{\fleq \scriptscriptstyle{L'_2U'_2} }}
\newcommand{\aluNew}{\mathfrak{a}_{\fleq \scriptscriptstyle{L_{new}U_{new}} }}
\newcommand{\ALU}[1]{\mathfrak{a}_{\fleq \scriptscriptstyle{#1} }}
\newcommand{\post}{\operatorname{Post}}
\newcommand{\pre}{\operatorname{Pre}}
\newcommand{\luNew}{\fleq_{\scriptscriptstyle \Lnew\Unew}}
\newcommand{\luP}{\fleq_{\scriptscriptstyle L'U'}}
\newcommand{\lu}{\fleq_{\scriptscriptstyle LU}}
\renewcommand{\int}[1]{\lfloor #1 \rfloor}
\newcommand{\xra}{\xrightarrow}
\newcommand{\Extra}{\mathit{Extra}}
\newcommand{\abstr}{\mathfrak{a}}
\newcommand{\abs}{\abstr}
\newcommand{\tto}{\Rightarrow}
\newcommand{\tuple}[1]{\ensuremath{(#1)}}
\newcommand{\Rpos}[0]{\ensuremath{\mathbb{R}_{\geq 0}}}
\newcommand{\Zall}[0]{\ensuremath{\mathbb{Z}}}
\renewcommand{\Nat}[0]{\ensuremath{\mathbb{N}}}
\renewcommand{\CC}[0]{\ensuremath{\Phi}} % clock-contraints
\newcommand{\val}[0]{\ensuremath{v}} % valuation
\newcommand{\vali}[0]{\ensuremath{\mathbf{0}}} % valuation: each x to 0
\newcommand{\elapse}[1]{\ensuremath{\overrightarrow{#1}}}
\newcommand{\reset}[1]{\ensuremath{[#1]}}
\newcommand{\ExtraLUp}[0]{\ensuremath{{\Extra_{LU}^+}}}
\newcommand{\tasim}{\preceq_{t.a.}}
\newcommand{\abssim}{\abs_{\tasim}}
\newcommand{\lleq}{\lessdot}
\newcommand{\ggeq}{\gtrdot}
\newcommand{\Post}{\operatorname{Post}}
\begin{document}

\title{Lazy abstractions for timed automata}

\author{F. Herbreteau\inst{1}, 
  B. Srivathsan\inst{2}, and I. Walukiewicz\inst{1}}
\institute{
  Univ. Bordeaux, CNRS, LaBRI, UMR 5800, F-33400 Talence, France
  \and
  Software Modeling and Verification group, RWTH Aachen University,
  Germany
}
\maketitle

\begin{abstract}
  We consider the reachability problem for timed automata.
  A standard solution to this problem involves computing a search tree
  whose nodes are abstractions of zones. 
  For efficiency reasons, they are parametrized by the maximal lower
  and upper bounds ($LU$-bounds) occurring in the guards of the
  automaton. 
  We propose an algorithm that is updating $LU$-bounds during exploration of
  the search tree.
  In order to keep them as small as possible,
  the bounds are refined only when they enable a transition
  that is impossible in the unabstracted system.
  % The smaller $LU$-bounds, the bigger are the associated abstractions, and
  % this in turn shrinks the state space of the algorithm. 
  So our algorithm can be seen as a kind of lazy CEGAR algorithm for
  timed automata. 
  We show that on several standard benchmarks, the algorithm
  is capable of keeping very small $LU$-bounds, and in
  consequence reduce the search space substantially. 
\end{abstract}

\section{Introduction}

Timed automata are obtained from finite automata by adding clocks that
can be reset and whose values can be compared with constants. 
The reachability problem asks if a given target state is reachable
from the initial state by the execution of a given automaton.
The standard solution to this problem involves computing,
so called, zone graph of the automaton, and the use of abstractions to
make the algorithm both terminating and more efficient.

Most abstractions are based on constants used in comparisons of clock values.
Such abstractions have already been considered in the seminal paper of Alur and
Dill~\cite{Alur:TCS:1994}. Behrmann et. al.~\cite{Behrmann:STTT:2006}
have proposed abstractions based on so called
$LU$-bounds, that are functions giving for every clock a maximal
constant in a lower, respectively upper bound, constraint in the
automaton.
In a recent paper~\cite{lics} we have shown how to efficiently use $\alu$
abstraction from~\cite{Behrmann:STTT:2006}.  Moreover, $\alu$ has been proved to be the
biggest abstraction that is sound for all automata with given
$LU$-bounds.  Since $\alu$ abstraction of a zone can result in a
non-convex set, we have shown in op.\ cit.\ how to use this
abstraction without the need to store the result of the abstraction.
This opens new algorithmic possibilities because  changing $LU$-bounds
becomes very cheap as abstractions need not be recalculated.  In
this paper we explore these possibilities.

The algorithm we propose works as follows. 
It constructs a graph with nodes of the form $(q,Z,LU)$, where $q$
is a state of the automaton, $Z$ is a zone, and $LU$ are parameters for
the abstraction.
It starts with the biggest abstraction: $LU$ bounds are set to
$-\infty$ which makes $\alu(Z)$ to be the set of all valuations for
every nonempty $Z$. The algorithm explores the zone graph using
standard transition relation on zones, without modifying $LU$ bounds
till it encounters a disabled transition. 
More concretely, till it reaches a node $(q,Z,LU)$ such that there is a
transition from $q$ that is not possible from $(q,Z)$ because no
valuation in $Z$ allows to take it. At this point we need to adjust
$LU$ bounds so that the transition is not possible from $\alu(Z)$
either. 
This adjustment is then propagated backwards through already
constructed part of the graph. 

The real challenge is to limit the propagation of bound updates.
For this, if the bounds have changed in a node $(q',Z',L'U')$ then we
consider its predecessor nodes $(q,Z,LU)$ and update its $LU$ bounds
as a function of $Z$, $Z'$ and $L'U'$.  
We give general conditions for
correctness of such an update, and a concrete efficient algorithm
implementing it. 
This requires getting into a careful analysis of the influence of the
transition on the zone $Z$. 
In the result we obtain an algorithm that exhibits exponential gains
on some standard benchmarks.

We have analyzed the performance of our algorithm theoretically as
well as empirically. 
We have compared it with static analysis algorithm that is the
state-of-the-art algorithm implemented in UPPAAL, and with an algorithm
we have proposed in~\cite{Herbreteau:FSTTCS:2011}. 
The later improves on the static analysis algorithm by considering
only the reachable part of the zone graph. 
For an example borrowed from~\cite{LugiezNiebertZ05} we have proved that the
algorithm presented here produces a linear size search graph while for
the other two algorithms, the search graph is exponential in the
size of the model.
For the classic FDDI benchmark, that has been tested on just about
every algorithm for the reachability problem, our algorithm shows
rather surprising fact that the time is almost irrelevant. 
There is only one constraint that induces $LU$ bounds, and in
consequence the abstract search graph constructed by our algorithm is
linear in the size of the parameter of FDDI.

Our algorithm can be seen as a kind of CEGAR algorithm similar in the
spirit to~\cite{HenJMSutre02}, but then there are also major differences.
In the particular setting of timed automata the information available
is much richer, and we need to use it in order to obtain a competitive
algorithm. 
First, we do not need to wait till a whole path is constructed to
analyze if it is spurious or not. 
Once we decide to keep zones in nodes we can immediately detect if an
abstraction is too large: it is when it permits a transition not
permitted from the zone itself. 
Next, the abstractions we use are highly specialized for
the reachability problem. Finally, the propagation of bound changes
gets quite sophisticated because it can profit from the large
amount of useful information in the exploration graph.

\paragraph{Related work}
Forward analysis is the main approach for the reachability testing of
real-time systems. The use of zone-based abstractions for termination
has been introduced in~\cite{Daws:TACAS:1998}.  The notion of
$LU$-bounds and inference of these bounds by static analysis of an
automaton have been proposed
in~\cite{Behrmann:TACAS:2003,Behrmann:STTT:2006}. The $\alu$
approximation has been introduced in~\cite{Behrmann:STTT:2006}. An
approximation method based on LU-bounds, called $\Extra^+_{LU}$, is
used in the current implementation of UPPAAL~\cite{Behrmann:QEST:2006}.
In~\cite{lics} we have shown how to efficiently use $\alu$
approximation. We have also proposed an $LU$-propagation
algorithm~\cite{Herbreteau:FSTTCS:2011} that
can be seen as applying the static analysis
from~\cite{Behrmann:TACAS:2003} on the zone graph instead of the graph
of the automaton; moreover this inference is done on-the-fly
during construction of the zone graph. In the present paper we do much
finer inference and propagation of $LU$-bounds.

Approximation schemes for analysis of timed-automata have been
considered almost immediately after introduction of the concept of
timed automata, as for example in~\cite{Wong-Toi-PhD,DillW95}
or~\cite{Sorea04}. In particular, the later citation  proposes to
abstract the region graph by not considering all the
constraints involved in the definition of a region. When a spurious
counterexample is discovered a new constraint is added. So in the
worst case the whole region graph will be constructed.  Our algorithm
in the worst case constructs an $\alu$-abstracted zone graph with $LU$-bounds obtained by static analysis.  This is as good as
state-of-the-art method used in UPPAAL. Another slightly related paper
is~\cite{BouyerLR05} where CEGAR approach is used to handle diagonal
constraints.

Let us mention that abstractions are not needed in backward
exploration of timed systems. Nevertheless, any feasible backward
analysis approach needs to simplify constraints. For
example~\cite{Morbe:CAV:2011} does not use approximations and relies
on an SMT solver instead. This approach, or the approach of
RED~\cite{WangRED04}, are very difficult to 
compare with the forward analysis approach we study here.

\paragraph{Organization of the paper} In the preliminaries section we
introduce all standard notions we will need, and $\alu$ abstraction in
particular. 
Section~\ref{sec:asg} gives a definition of adaptive simulation
graph (ASG). Such a graph represents the search space of a forward
reachability testing algorithm that will search for an abstract run
with respect to $\alu$ abstraction, while changing $LU$-bounds
dynamically during exploration. 
Section~\ref{sec:algo} gives an algorithm for constructing an ASG with
small $LU$-bounds. 
Section~\ref{sec:newbounds} presents the two crucial functions used in
the algorithm: the one updating the bounds due to disabled edges, and
the one propagating the change of bounds. Section~\ref{sec:example}
explains some advantages of algorithm on variations of an example
borrowed from~\cite{LugiezNiebertZ05}. The experiments section compares our
prototype tool with UPPAAL, and our algorithm
from~\cite{Herbreteau:FSTTCS:2011}. Conclusions section gives  some justification for our
choice of concentrating on $LU$-bounds.

%%% Local Variables: 
%%% mode: latex
%%% TeX-master: "m"
%%% End: 

% LocalWords:  reachability Alur Behrmann et al L'U UPPAAL FDDI SMT
% LocalWords:  ASG

\section{Preliminaries}

% Timed automata

\subsection{Timed automata and the reachability problem}
\label{sec:timed-automata}

Let $X$ be a set of clocks, i.e., variables that range over $\Rpos$,
the set of non-negative real numbers. A \emph{clock constraint} is a
conjunction of constraints $x\# c$ for $x\in X$,
$\#\in\{<,\leq,=,\geq,>\}$ and $c\in \Nat$, e.g. $(x \le 3 \wedge y >
0)$. Let $\CC(X)$ denote the set of clock constraints over clock
variables $X$.  A \emph{clock valuation} over $X$ is a function
$\val\,:\,X\rightarrow\Rpos$. We denote $\Rpos^X$ the set of clock
valuations over $X$, and $\vali$ the valuation that associates $0$ to
every clock in $X$. We write $\val\sat \phi$ when $\val$ satisfies
$\phi\in \CC(X)$, i.e. when every constraint in $\phi$ holds after
replacing every $x$ by $\val(x)$. For $\d\in\Rpos$, let $\val+\d$ be
the valuation that associates $\val(x)+\d$ to every clock $x$. For
$R\subseteq X$, let $\reset{R}\val$ be the valuation that sets $x$ to
$0$ if $x\in R$, and that sets $x$ to $\val(x)$ otherwise.

A \emph{timed automaton (TA)} is a tuple $\Aa=\tuple{Q,q_0,X,T,\Acc}$
where $Q$ is a finite set of states, $q_0\in Q$ is the initial state,
$X$ is a finite set of clocks, $\Acc\subseteq Q$ is a set of accepting
states, and $T\,\subseteq\, Q\times \CC(X)\times 2^X \times Q$ is a
finite set of transitions $\tuple{q,g,R,q'}$ where $g$ is a
\emph{guard}, and $R$ is the set of clocks that are \emph{reset} on
the transition.

A \emph{configuration} of $\Aa$ is a pair $(q,v)\in Q\times\Rpos^X$
and $(q_0,\vali)$ is the \emph{initial configuration}. We have two
kinds of transitions:

\smallskip

\par\noindent\textbf{Delay:} $(q,v)\to^\d(q,v+\d)$ for some $\d\in
\Rpos$;

\smallskip

\par\noindent\textbf{Action:} $(q,v)\to^t(q,v')$ for some transition
$t=(q,g,R,q')\in T$ such that $v\sat g$ and $v'=[R]v$.

A \emph{run} of $\Aa$ is a finite sequence of transitions starting
from the initial configuration $(q_0, \vali)$. Without loss of
generality, we can assume that the first transition is a delay
transition and that delay and action transitions alternate. We write
$(q, v) \xra{\d, t} (q',v')$ if there is a delay transition $(q,v)
\to^\d (q, v+\d)$ followed by an action transition $(q, v+\d) \to^{t}
(q',v')$.  So a run of $\Aa$ can be written as:
\begin{equation*}
  (q_0, v_0) \xra{\d_0, t_0} 
  (q_1, v_1) \xra{\d_1, t_1} 
  (q_2, v_2) \dots 
  (q_n, v_n)
\end{equation*}
where $(q_0, v_0)$ represents the initial configuration $(q_0,\vali)$.
 
A run is \emph{accepting} if it ends in a configuration
$\tuple{q_n,v_n}$ with $q_n\in \Acc$. 

\begin{definition}[Reachability problem]
  The \emph{reachability problem} for timed automata is to decide
  whether there exists an accepting run of a given automaton.
\end{definition}
This problem is known to
be~\PSPACE-complete~\cite{Alur:TCS:1994,Courcoubetis:FMSD:1992}.
The class of TA we consider is usually known as diagonal-free TA since
clock comparisons like $x-y\leq 1$ are disallowed. Notice that if we
are interested in state reachability, considering timed automata
without state invariants does not entail any loss of generality as the
invariants can be added to the guards. For state reachability, we can
also consider automata without transition labels.

% Zone graph
\subsection{Zones and symbolic runs}

Here we introduce zones that are sets of valuations defined by simple
linear constraints. We also define symbolic transition relation working
on sets of valuations. These definitions will allow us to concentrate
on symbolic runs instead of concrete runs as in the previous section. 

We first define a transition relation $\tto$ over nodes of the form
$(q,W)$ where $W$ is a set of valuations.

\begin{definition}[Symbolic transition
  $\tto$]\label{def:transition-relation}
  Let $\Aa$ be a timed automaton. For every transition $t$ of $\Aa$
  and every set of valuations $W$, we have a transition $\tto^t$
  defined as follows:
  \begin{align*}
    & (q, W) \tto^t (q',W')  \text{ where } W' = \{ v' ~|~ \exists
    v \in W, ~\exists \d \in \Rpos.~ (q,v) \to^t \to^\d (q',v') \}
  \end{align*}
  We will sometimes write $\Post_t(W)$ for $W'$. 
  The transition relation $\tto$ is the union of all $\tto^t$.
  % The symbolic semantics of a timed automaton $\Aa = (Q, q_0, X, T,
  % F)$ is a transition system $\Sym_\Aa$ given by the tuple $(\bsym,
  % \ssym_0, \tto)$. A \emph{symbolic state} belonging to $\bsym$ is a
  % pair $(q, W)$ with $q \in Q$ and $W$ a set of valuations. The
  % \emph{initial symbolic state} $\ssym_0$ is $(q_0, \{\vali\})$. The
  % \emph{symbolic transition relation} $\tto$ is a union of two kinds
  % of transitions $\tto^\t$ and $\tto^t$ defined as follows:
  % \begin{align*}
  %   & (q,W) \tto^\t (q, W') \qquad \text{where } W'= \{ v' ~|~
  %   \exists
  %   v
  %   \in W~ \exists \d \in \Rpos. ~v \to^\d v' \} \\
  %   & (q, W) \tto^t (q',W') \qquad \text{where } W'= \{ v'~|~
  %   \exists
  %   v \in W.~ v \to^t v'\}
  % \end{align*}
\end{definition}

\medskip

The transition relation defined above considers each valuation $v \in
W$ that can take the transition $t$, obtains the valuation after the
transition and then collects the
time-successors from this obtained valuation. Therefore the symbolic
transition $\tto$ always yields sets closed under time-successors. 
The initial configuration of the automaton is $(q_0, \vali)$. Starting
from the initial valuation $\vali$ the set of valuations reachable by a
time elapse at the initial state are given by $\{ \vali + \d ~|~ \d
\in \Rpos\}$. Call this $W_0$. From $(q_0, W_0)$ as the initial node,
computing the symbolic transition relation $\tto$ leads to different nodes $(q,
W)$ wherein the sets $W$ are closed under time-successors.

% \begin{remark}
%   Our choice of semantics in the symbolic transition $\tto$ of
%   Definition~\ref{def:transition-relation} differs from some
%   literature~\cite{bouyer-hab2009,TYB:FMSD:2005} in the order of time
%   and action transitions. We explain our choice of semantics in
%   Appendix~\ref{cha:choice-semantics}.
% \end{remark}

It has been noticed in~\cite{Bengtsson:Springer:2004} that the sets $W$ obtained in the
nodes $(q,W)$ can be described by some simple constraints involving
only the difference between clocks.
This has motivated the definition of \emph{zones}, which are sets of
valuations defined by difference constraints.

\begin{definition}[Zones~\cite{Bengtsson:Springer:2004}]
  A zone is a set of valuations defined by a conjunction of two kinds
  of clock constraints: $x \sim c$ and $x - y \sim c$
  for $x, y \in X$, $\sim \in \{ \le, <, = , >, \ge\}$, and $c \in
  \Zall$. 
\end{definition}

For example $(x>4 \land y-x\leq 1)$ is a zone.
It can be shown that starting from a node $(q,W)$ with $W$ being a
zone, the transition $(q,W) \tto (q',W')$ leads to a node in which
$W'$ is again a zone~\cite{Bengtsson:Springer:2004}.
 Observe that the initial set of valuations $Z_0 = \{ \vali + \d~|~\d
 \in \Rpos \}$
is indeed a zone: it is given by the constraints 
$\Land_{x,y \in X} ~(x \ge 0 ~\land~ x - y =0)$

These observations lead to a notion of \emph{symbolic run} that is a
sequence of symbolic transitions
\begin{equation*}
  (q_0,Z_0)\tto (q_1,Z_1)\tto\dots
\end{equation*}
\begin{proposition}
  Fix a timed automaton.
  The automaton has an accepting run if and only if there it has a symbolic run
  reaching an accepting state and non-empty zone.
\end{proposition}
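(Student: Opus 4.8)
The plan is to prove the two implications separately, in each case matching a concrete run and a symbolic run that use the \emph{same} sequence of discrete transitions $t_0,\dots,t_{n-1}$. The one bookkeeping point to keep in mind is that a concrete step $\xra{\d,t}$ first delays and then acts, whereas the symbolic step $\tto^{t}=\Post_t$ first acts and then delays; this mismatch is harmless because $Z_0=\{\vali+\d\mid \d\in\Rpos\}$ is closed under time-successors and, as recalled above, so is every zone reachable by $\tto$.

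For ``accepting run $\Rightarrow$ symbolic run'', take an accepting run $(q_0,v_0)\xra{\d_0,t_0}(q_1,v_1)\cdots\xra{\d_{n-1},t_{n-1}}(q_n,v_n)$ with $v_0=\vali$ and $q_n\in\Acc$, and let $(q_0,Z_0)\tto^{t_0}(q_1,Z_1)\cdots\tto^{t_{n-1}}(q_n,Z_n)$ be the symbolic run using the same transitions. I would prove $v_i\in Z_i$ by induction on $i$: the base case is $v_0=\vali=\vali+0\in Z_0$; for the step, from $v_i\in Z_i$ and time-closure of $Z_i$ we get $v_i+\d_i\in Z_i$, and since the run takes $t_i$ we have $v_i+\d_i\sat g_i$ and $v_{i+1}=[R_i](v_i+\d_i)$, so by the definition of $\Post_{t_i}$ (choosing the trailing delay to be $0$) $v_{i+1}\in Z_{i+1}$. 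Then $Z_n\ni v_n$ is non-empty and the symbolic run reaches $q_n\in\Acc$.

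For the converse, suppose $(q_0,Z_0)\tto^{t_0}\cdots\tto^{t_{n-1}}(q_n,Z_n)$ with $q_n\in\Acc$ and $Z_n\neq\emptyset$. Pick $u_n\in Z_n$ and walk backwards: since $Z_{i+1}=\Post_{t_i}(Z_i)$, from $u_{i+1}\in Z_{i+1}$ the definition of $\Post_{t_i}$ gives some $u_i\in Z_i$ and $\d_i\in\Rpos$ with $(q_i,u_i)\to^{t_i}\to^{\d_i}(q_{i+1},u_{i+1})$; in particular $u_i\sat g_i$ and $u_{i+1}=[R_i]u_i+\d_i$. Iterating down to $i=0$ yields $u_0\in Z_0$, so $u_0=\vali+\d_{-1}$ for some $\d_{-1}\in\Rpos$. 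Concatenating the pieces produces the concrete computation $(q_0,\vali)\to^{\d_{-1}}\to^{t_0}\to^{\d_0}\to^{t_1}\cdots\to^{t_{n-1}}\to^{\d_{n-1}}(q_n,u_n)$, which alternates delays and actions; regrouping each consecutive delay-then-action pair gives a run $(q_0,\vali)\xra{\d_{-1},t_0}(q_1,v_1')\xra{\d_0,t_1}\cdots\xra{\d_{n-2},t_{n-1}}(q_n,v_n')$ of $\Aa$ ending in state $q_n\in\Acc$ (the final delay $\d_{n-1}$ is simply discarded, and for $n=0$ the empty run already witnesses $q_0\in\Acc$).

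I expect the only point requiring care to be this last regrouping: one must check that the valuation handed to transition $t_{i+1}$ in the regrouped run indeed satisfies $g_{i+1}$. This is exactly where $u_{i+1}=[R_i]u_i+\d_i$ is used — that valuation equals $u_{i+1}$, and $u_{i+1}$ was produced as the source of the backward step for $t_{i+1}$, hence $u_{i+1}\sat g_{i+1}$. Everything else is a routine unfolding of the definitions of $\to^{\d}$, $\to^{t}$, $\xra{}$ and $\Post$.
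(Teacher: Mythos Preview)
The paper states this proposition without proof, treating it as a standard fact about zone graphs (implicitly relying on~\cite{Bengtsson:Springer:2004}). Your argument is correct and is precisely the standard one: induct forward to show $v_i\in Z_i$, and walk backward through the definition of $\Post_{t_i}$ to extract a concrete run, handling the delay/action reordering via the time-closure of $Z_0$ and of each $Z_i$. There is nothing to add.
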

This proposition does not yet give a complete solution to the
reachability problem since there may be infinitely many reachable
zones, so it is not immediate how to algorithmically check that a symbolic
run does not exist. A standard solution to this problem of
non-termination is to use abstractions that we introduce in the next
subsection.

% Abstractions

% Parameters for abstraction (static analysis, constant propagation) 

\subsection{Bounds and abstractions}

In the previous subsection, we have defined zones. We have used zones
instead of valuations to solve the reachability problem.  Since the
number of reachable zones can be infinite, the next step is to group
zones together into a finite number of equivalence classes.  An
\emph{abstraction operator} is a convenient way to express a grouping
of valuations, and in consequence grouping of zones. Instead of
discussing abstractions in full generality, we will immediately
proceed to the most relevant case of abstractions based on
time-abstract simulation~\cite{Tasiran:CONCUR:1996}.

For this subsection we fix a timed automaton $\Aa$.

\begin{definition}[Time-abstract simulation]
  \label{def:time-abstr-simulation}
  A \emph{(state based) time-abstract simulation} between
  configurations of $\Aa$ is a relation $(q, v) \tasim (q',v')$
  such that:
  \begin{itemize}
  \item $q = q'$,
  \item if $(q,v) \to^\d (q, v + \d) \to^t (q_1, v_1)$, then there
    exists a $\d' \in \Rpos$ such that $(q,v') \to^{\d'} (q, v' + \d')
    \to^t (q_1, v_1')$ satisfying $(q_1, v_1) \tasim (q_1,v_1')$ for
    the same transition $t$.
  \end{itemize}
\end{definition}

For two valuations $v,v'$, we say that $v \tasim v'$ if for every
state $q$ of the automaton, we have $(q,v) \tasim (q',v')$.  An
abstraction $\abssim$ based on a simulation $\tasim$ can be defined as
follows:

\begin{definition}[Abstraction based on simulation]
  \label{def:abstr-based-ta-sim}
  Given a set $W$, we define $\abssim(W) = \{ v~|~ \exists v' \in W.\
  v \tasim v'\}$.
  The abstract transition relation is $(q,W)\tto_{\abssim}(q',\abssim(W'))$
  where $W=\abssim(W)$ and $(q,W)\tto (q',W')$ (cf.\
  Definition~\ref{def:transition-relation}).  
\end{definition}

Let $\tto_{\abssim}^*$ denote the reflexive and transitive closure of
$\tto_{\abssim}$. Similarly, let $\to^*$ denote the reflexive and
transitive closure of the transition relation $\to$ of the automaton. 
It can be easily verified that the
abstract transition relation satisfies the following two important
properties ($W_0$ denotes $\{\vali + \d~|~ \d \in \Rpos\}$)

\begin{description}
\item[Soundness:] if $(q_0,W_0)\tto_{\abssim}^*(q,W)$ then there is
  $v\in W$ such that $(q_0,\vali)\to^*(q,v)$.

\item[Completeness:] if $(q_0,\vali)\to^*(q,v)$ then there is $W$ such
  that $v\in W$ and $(q_0,W_0)\tto_{\abssim}^*(q,W)$.
\end{description}
These properties immediately imply that abstract transitions can be
used to solve the reachability problem.

\begin{proposition}\label{prop:abstract run}
  For every abstraction operator $\abssim$ based on timed-abstract simulation.
  Automaton $\Aa$ has a run reaching a state $q$ iff there is an
  abstract run
  \begin{equation*}
    (q^0,W_0)\tto_{\abssim} (q_1,W_1)\tto_{\abssim}\dots\tto_{\abssim}(q,W)
  \end{equation*}
  for some $W\not=\es$. 
\end{proposition}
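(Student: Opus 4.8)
The plan is to obtain the proposition directly from the \textbf{Soundness} and \textbf{Completeness} properties recorded just above, after making one observation: ``$\Aa$ has a run reaching a state $q$'' means exactly ``$(q_0,\vali)\to^*(q,v)$ for some valuation $v$'', since a run is by definition a sequence of delay and action transitions issued from $(q_0,\vali)$ and its last configuration has the form $(q,v)$.

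Granting this, each implication is one line. For the ``if'' direction I would take an abstract run $(q_0,W_0)\tto_{\abssim}^*(q,W)$ with $W\neq\es$ and apply Soundness to get $v\in W$ with $(q_0,\vali)\to^*(q,v)$, i.e.\ a run reaching $q$; the hypothesis $W\neq\es$ is precisely what makes the ``there is $v\in W$'' clause of Soundness non-vacuous. For the ``only if'' direction I would take a run $(q_0,\vali)\to^*(q,v)$ and apply Completeness to get a set $W$ with $v\in W$ and $(q_0,W_0)\tto_{\abssim}^*(q,W)$; since $v\in W$ we automatically have $W\neq\es$, so this is an abstract run of the required form.

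Thus the real mathematical content sits in Soundness and Completeness, which I would prove by induction on the length of the run, resting on two elementary facts. (1) $\abssim$ is a closure operator: $W\subseteq\abssim(W)$ because $\tasim$ is reflexive (the identity is a time-abstract simulation), and $\abssim(\abssim(W))=\abssim(W)$ because $\tasim$ is transitive (a composition of time-abstract simulations is again one); so the side-condition ``$W=\abssim(W)$'' in Definition~\ref{def:abstr-based-ta-sim} is met for free after the first step. (2) From Definition~\ref{def:time-abstr-simulation}, $v\tasim v'$ implies $v+\d\tasim v'$ for every $\d\in\Rpos$ --- every move of $v+\d$ is already a move of $v$, so a response of $v'$ to $v$ also serves for $v+\d$ --- and, in particular, every set occurring in a symbolic node ($W_0$ and every $\Post_t$-image) is closed under time-successors, as the text already notes after Definition~\ref{def:transition-relation}. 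Completeness is then the easy direction, pushing a concrete witness forward: if $v_i\in W_i$ and $(q_i,v_i)\to^{\d}(q_i,v_i+\d)\to^{t}(q_{i+1},v_{i+1})$, then $v_i+\d\in W_i$ by time-closure and satisfies the guard, so $v_{i+1}=\reset{R}(v_i+\d)\in\Post_t(W_i)\subseteq W_{i+1}$.

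The step I expect to be the real work is choosing the inductive invariant for Soundness: the naive guess ``every $v\in W_n$ is concretely reachable at $q_n$'' is false, since $W_n$ over-approximates. The workable invariant is ``every $v\in W_n$ is $\tasim$-below some concretely reachable configuration $(q_n,v^\flat)$''. Maintaining it across $(q_n,W_n)\tto_{\abssim}(q_{n+1},W_{n+1})$ forces one to reconcile the order of operations --- the symbolic $\Post_t$ acts before it delays, a concrete step delays before it acts --- and this is exactly where facts (1) and (2) are used: unfold $v\in W_{n+1}=\abssim(\Post_t(W_n))$ as $v\tasim \reset{R}u+\d'$ with $u\in W_n$ past the guard, apply the invariant to $u$ to obtain a reachable $u^\flat$ with $u\tasim u^\flat$, lift it through the action $t$ via Definition~\ref{def:time-abstr-simulation} to a reachable $w^\flat=\reset{R}(u^\flat+\d'')$ with $\reset{R}u\tasim w^\flat$, absorb $\d'$ by fact (2) to get $\reset{R}u+\d'\tasim w^\flat$, and chain by transitivity to conclude $v\tasim w^\flat$. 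Beyond this bookkeeping there is nothing deep, which is why the paper is content to call Soundness and Completeness ``easily verified''.
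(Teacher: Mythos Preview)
Your approach is exactly the paper's: the paper states Soundness and Completeness, says they are ``easily verified'', and then asserts that the proposition follows immediately --- no further proof is given. You supply the details the paper omits.

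One small point on your Soundness sketch: the invariant ``every $v\in W_n$ is $\tasim$-below some concretely reachable $(q_n,v^\flat)$'' does not assert $v^\flat\in W_n$, so it does not literally yield Soundness as the paper states it (``there is $v\in W$ with $(q_0,\vali)\to^*(q,v)$''); it only yields that the \emph{state} $q_n$ is reachable, which is all the proposition needs. If you want Soundness verbatim, strengthen the invariant to require $v^\flat\in W_n$. Your inductive step already preserves this stronger form: once $u^\flat\in W_n$, time-closure of $W_n$ gives $u^\flat+\d''\in W_n$, hence $w^\flat=\reset{R}(u^\flat+\d'')\in\Post_t(W_n)\subseteq W_{n+1}$.
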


\begin{remark}
  \label{rem:coarse-abstractions}
  If $\abstr$ and $\mathfrak{b}$ are two abstractions such that for
  every set of valuations $W$ we have $\abstr(W) \incl
  \mathfrak{b}(W)$ then we prefer to use $\mathfrak{b}$ since every
  abstract run with respect to $\mathfrak{a}$ is also a run with
  respect to $\mathfrak{b}$. In consequence, it is easier to find an
  abstract run for $\mathfrak{b}$ abstraction. 
\end{remark}

Therefore, the aim is to come up with a finite abstraction as coarse
as possible, that still maintains the soundness property.

For a given automaton it can be computed if two
configurations are in a simulation relation. It should be noted though
that computing the coarsest simulation relation is
\EXPTIME-hard~\cite{larsch00}. Since the reachability problem can be
solved in \PSPACE{}, this suggests that it may not be reasonable to
try to solve it using the abstraction based on the coarsest
simulation. We can get simulation relations that are computationally
easier if we consider only a part of the structure of the
automaton. The common way is to look at constants appearing in the
guards of the automaton and consider them as parameters for
abstraction.

\subsection{LU-bounds and LU-abstractions}

The most common parameter taken for defining
abstractions are $LU$-bounds. 

\begin{definition}[LU-bounds]\label{def:bounds}
  The $L$ bound for an automaton $\Aa$ is the function assigning to
  every clock $x$ a maximal constant that appears in a lower bound
  guard for $x$ in $\Aa$, that is, maximum over guards of the form $x
  > c$ or $x \ge c$. Similarly $U$ is the function assigning to every
  clock $x$ a maximal constant appearing in an upper bound guard for
  $x$ in $\Aa$, that is, maximum over guards of the form $x < c$ or $x
  \le c$.
\end{definition}

The paper introducing LU-bounds \cite{Behrmann:STTT:2006} also
introduced an abstraction operator $\alu$ that uses LU-bounds as parameters.
We begin by recalling the definition of an LU-preorder defined
in~\cite{Behrmann:STTT:2006}. We use a different but equivalent
formulation.

\begin{definition}[LU-preorder~\cite{Behrmann:STTT:2006}]
  \label{def:lu-preorder}
  Let $L,U:X\to\Nat \cup \{-\infty\}$ be two bound functions. For a
  pair of valuations we set $v\lu v'$ if for every clock $x$:
  \begin{itemize}
  \item if $v'(x)<v(x)$ then $v'(x)> L_x$, and
  \item if $v'(x)>v(x)$ then $v(x)>U_x$.
  \end{itemize}
\end{definition}

It has been shown in ~\cite{Behrmann:STTT:2006} that $\lu$ is a
time-abstract simulation relation.  The $\alu$ abstraction is based on
this LU-preorder $\lu$.

\begin{definition}[$\alu$-abstraction~\cite{Behrmann:STTT:2006}]
  Given $L$ and $U$ bound functions, for a set of valuations $W$ we
  define:
  \begin{equation*}
    \alu(W)=\set{v ~|~ \exists v'\in W.\  v\lu v'}.
  \end{equation*}
\end{definition}

\begin{figure}[!t]
  \centering \begin{tikzpicture}

  % % Regions
  % \foreach \x in {0, 0.5,...,2.5} \draw[very thin, gray] (\x,0) --
  % (\x,4); \foreach \x in {0.5, 1, ..., 2} \draw[very thin, gray]
  % (\x,0) -- (2.5, 2.5 - \x);
  % % \draw[very thin, gray] (0,0) -- (2,2);
  % \foreach \y in {0, 0.5,...,2.5} { \draw[very thin, gray] (0,\y) --
  %   (4,\y); \draw[very thin, gray] (0,\y) -- (2.5 - \y, 2.5); }
    
  % bounds
  \begin{scope}[xshift = -1cm]
    \draw[thick] (2,0) -- (2,4); \draw (2, -0.3) node {\scriptsize
      $\mathbf{U_x}$} (2.5, -0.4); \draw[thick] (3.5,0) --
    (3.5,4); \draw (3.5, -0.3) node {\scriptsize
     $\mathbf{L_x}$} (4, -0.4);
  \end{scope}
  
  \begin{scope}[yshift=-0.5cm]
    \draw[thick] (0,1.5) -- (4,1.5); \draw (-0.4, 1.5) node
    {\scriptsize $\mathbf{L_y}$} (0,1.5); \draw[thick]
    (0,3) -- (4,3); \draw (-0.4, 3) node {\scriptsize
      $\mathbf{U_y}$} (0, 3);
  \end{scope}

  % Axes
  \draw[->, very thick, >=stealth] (0,0) -- (0,4); \draw[->, very
  thick, >=stealth] (0,0) -- (4,0); \draw
  (-0.2,-0.2) node {\scriptsize $0$} (0, -0.4); \draw (4.2, -0.2) node
  {\scriptsize $x$} (4.2, -0.2); \draw (-0.2, 4.2) node {\scriptsize
    $y$} (0,4.2);

  % Zone
  \draw[thick, fill=gray, nearly transparent] (0.5,0)
    -- (4,3.5) -- (4,0) -- cycle;
  \draw (3.25, 1.75) node {\scriptsize $Z$ } (3.5, 1.75); 
  %\draw (2,3) node {\footnotesize $\alu(Z)$} (3,3); 

  % alu(Z)
 \fill[pattern = dots] (1,0.5) -- (1,4) -- (4,4) -- (4,3.5) -- cycle; 

 % (2.155, 4) -- (2.155, 0) -- (4,0)
 %  -- (4,4) -- cycle; \fill[red!50, nearly transparent] (1, 1.345) --
 %  (2.15,1.345) -- (2.15, 0) -- (1,0) -- cycle; \fill[red!50, nearly
 %  transparent] (0.5, 0) -- (1,0.5) -- (1,0) -- cycle; 
 % \draw[red!90!white] (3.25, 3.4) node {\scriptsize $\alu(Z)$} (3.5,
 % 3.4); 

  % \fill[red!50, nearly transparent] (0,0) -- (4,4) -- (5,4) --
  % (1,0);
  % \fill[red!50, nearly transparent] (2,4) -- (2,2) -- (4, 4)
  % -- cycle;
  % \fill[red!50, nearly transparent] (5,4) -- (4,3) -- (5,3)--
  % cycle;

 \begin{scope}[xshift = -3cm, yshift=-1cm]
    \node[left] at (9.5, 3.1) {\footnotesize $Z :$};
    \node[left] at (9.5, 2.5) {\footnotesize $\alu(Z):$};

    \draw[thick, fill=gray, nearly transparent] (9.5,2.9) rectangle
   (9.9, 3.3);
    \draw[thick, fill=gray, nearly transparent] (9.5,2.3) rectangle
  (9.9, 2.7);
   
   \node at (10.2, 2.5) {$\cup$};
   \draw[fill, pattern = dots] (10.5,2.3) rectangle
   (10.9, 2.7);
\end{scope}

\end{tikzpicture}

%%% Local Variables: 
%%% mode: latex
%%% TeX-master: "../m"
%%% End: 
  \caption{Zone $Z$ is given by the grey area. Abstraction $\alu(Z)$
    is given by the grey area along with the dotted area}
  \label{fig:alu-example}
\end{figure}

Figure~\ref{fig:alu-example} gives an example of a zone $Z$ and its
abstraction $\alu(Z)$. It can be seen that $\alu(Z)$ is not a convex
set.

An efficient algorithm to use the $\alu$ abstraction for reachability
was proposed in~\cite{lics}. Moreover in op cit. it was shown that
over time-elapsed zones, $\alu$ abstraction is optimal when the only
information about the analyzed automaton are its $LU$-bounds. Informally
speaking, for a fixed $LU$, the $\alu$ abstraction is the biggest
abstraction that is sound and complete for all automata using guards
within $LU$-bounds.

Since the abstraction $\alu$ is optimal, the next improvement is to
try to get as good $LU$-bounds as possible since tighter bounds give
coarser abstractions.  Recall Remark~\ref{rem:coarse-abstractions}
which states the importance of having coarser abstractions.  

It has been proposed in~\cite{Behrmann:TACAS:2003} that instead of
considering one $LU$-bound for all states in an automaton, one can use
different bound functions for each state. For every
state $q$ and every clock $x$, constants $L_x(q)$ and $U_x(q)$ 
are determined by the least solution of the following set of
inequalities. For each transition $(q, g, R, q')$ in the automaton, we
have:
\begin{align}\label{eq:static}
  \begin{cases}
    L_x(q) \ge c & \text{ if } x \ggeq c \text{ is a constraint in } g \\
    L_x(q) \ge L_x(q') & \text{ if } x \not \in R
  \end{cases}
\end{align}
Similar inequalities are written for $U$, now considering $x \lleq
c$. It has been shown in~\cite{Behrmann:TACAS:2003} that such an
assignment of constants is sound and complete for state reachability.
Experimental results have shown that this method, that performs a
static analysis on the structure of the automaton, often gives
very big gains.

%%% Local Variables: 
%%% mode: latex
%%% TeX-master: "m"
%%% End: 

% LocalWords:  reachability invariants iff preorder grey
% LocalWords:  algorithmically

\section{Adaptive simulation graph}\label{sec:asg}

In this paper we improve on the idea of static analysis that computes
$LU$-bounds for each state $q$. We will compute $LU$-bounds on-the-fly
while  searching for an abstract run. 
The immediate gain will be that bounds will depend not only on a state
but also on a set of valuations.  
The real freedom given by an adaptive simulation graph and
Theorem~\ref{thm:algo-correct} presented below is that they
will allow to ignore some guards of transitions when calculating the
$LU$ bounds. 
As we will see in experimental section, this can result in very big
performance gains.

We will construct forward reachability testing algorithm that will
search for an abstract run with respect to $\alu$ abstraction, where
$LU$ bounds will change dynamically during exploration.  
The intuition of a search space of such an algorithm is formalized in
a notion of adaptive simulation graph (ASG). 
Such a graph permits to change $LU$ bounds from node to node,
provided some consistency conditions are satisfied.
$LU$-bounds play an important role in this graph.
They are used to stop developing successors of a node as soon as
possible. 
So our goal will be to find as small $LU$-bounds as possible in order
to cut the paths of the graph as soon as possible.

\begin{definition}[Adaptive simulation graph (ASG)]\label{df:asg}
Fix an automaton $\Aa$. An ASG graph has nodes of the form 
 $(q,Z,LU)$ where $q$ is the state of
$\Aa$, $Z$ is a zone, and $LU$ are bound functions. Some nodes are
declared to be tentative. The graph
is required to satisfy three conditions:
\begin{description}
\item[G1] For the initial state $q^0$ and initial zone $Z_0$, a node
  $(q_0,Z_0,LU)$ should appear in the graph for some $LU$.
\item[G2] If a node $(q,Z,LU)$ is not tentative then for every
  transition $(q,Z)\tto_t (q',Z')$ the node
  should have a successor labeled $(q',Z',L'U')$ for some $L'U'$.
\item[G3] If a node $(q,Z,LU)$ is tentative then there should be
  non-tentative node $(q',Z',L'U')$ such that $q=q'$ and $Z\incl \aluP(Z')$. Node
  $(q,Z',L'U')$ is called covering node. 
\end{description}
We will also require that the following invariants are satisfied:
\begin{description}
\item[I1] If a transition $\tto_t$ is disabled from $(q,Z)$, and
  $(q,Z,LU)$ is a node of the ASG then $\tto_t$ should be disabled from
  $\alu(Z)$ too;
\item[I2] For every edge $(q, Z, LU) \tto_t (q',Z',L'U')$ the ASG we have:
  \begin{align*}
    \post_t(\alu(Z)) \incl \aluP(Z').
  \end{align*}
\item[I3] For every tentative node $(q, Z_1, L_1U_1)$ and the
  corresponding covering node $(q, Z_2, L_2U_2)$, we have:
  \begin{align*}
    L_2U_2 \le L_1U_1.
  \end{align*}
\end{description}
\end{definition}
The conditions G1, G2, G3 express the expected requirements for a
graph to cover all reachable configurations. 
In particular, the condition G3 allows to stop exploration if there is
already a ``better'' node in the graph. 
The three invariants are more subtle. They imply that $LU$-bounds
should be big enough for the reachability information to be preserved.
(cf.~Theorem~\ref{thm:algo-correct}).

\textbf{Remark:} While the idea is to work with nodes of the form
$(q,W)$ with $W=\alu(W)$, we do not want to store $W$ directly, as we
have no efficient way of representing and manipulating such
sets. Instead we represent each $W$ as $\alu(Z)$. So we store $Z$ and
$LU$. This choice is algorithmically cheap since testing the
inclusion $Z'\incl \alu(Z)$ is practically as easy as testing $Z'\incl
Z$~\cite{lics}. This approach has another big advantage: when we change $LU$
bound in a node, we do not need to recalculate $\alu(Z)$.

\textbf{Remark:}\label{rem:asg-finite} It is important to observe that for every $\Aa$ there
exists a finite ASG. 
For example, it is sufficient to take static $LU$-bounds as described
in~\eqref{eq:static}. 
It means that we can take ASG whose nodes are $(q,Z,L(q)U(q))$ with
bound functions given by static analysis. 
It is easy to see that such a choice makes all three invariants hold.

The next theorem tells us that any ASG is good enough to determine the
existence of an accepting run. Our objective in the later section will
be to construct as small ASG as possible.

\begin{theorem}\label{thm:algo-correct}
  Let $G$ be an ASG for an automaton $\Aa$. An accepting state is reachable 
  by a run of $\Aa$ iff a node containing an accepting state of $\Aa$ and a
  non-empty zone is reachable from the initial node of $G$.
\end{theorem}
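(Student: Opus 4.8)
The plan is to prove the two directions separately, with the right-to-left direction being the routine one and the left-to-right direction requiring the real work.

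\paragraph{Easy direction (right to left).} Suppose a node $(q,Z,LU)$ with $q \in \Acc$ and $Z \neq \es$ is reachable from the initial node of $G$. First I would argue that along the path in $G$ witnessing reachability, I can get rid of tentative nodes: whenever the path passes through (or ends at) a tentative node $(q,Z_1,L_1U_1)$, condition G3 gives a non-tentative covering node $(q,Z_2,L_2U_2)$ with $Z_1 \incl \aluTwo(Z_2)$, and by invariant I3 together with the monotonicity of $\alu$ in the bounds we get $\aluOne(Z_1) \incl \aluTwo(Z_2)$; hence I can splice in the covering node and continue from there, at the cost of replacing the current abstracted set by a larger one. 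The key point is that being non-tentative, the covering node satisfies G2, so every symbolic transition out of $(q,Z_2)$ is present in $G$. Using invariant I2 along the (now tentative-free) path, one shows by induction that the sequence of nodes $(q_i,Z_i,L_iU_i)$ yields an abstract run $(q_0,W_0) \tto_{\abssim} \cdots \tto_{\abssim} (q,W)$ for the $\alu$-based abstraction, where at each step $W_i \supseteq \alu(Z_i)$; in particular the final $W \supseteq \alu(Z) \supseteq Z \neq \es$. Here I need that $\alu$ is an abstraction based on a time-abstract simulation, which is exactly the cited result that $\lu$ is a time-abstract simulation. Then Proposition~\ref{prop:abstract run} gives a concrete run of $\Aa$ reaching $q \in \Acc$.

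\paragraph{Hard direction (left to right).} Suppose $\Aa$ has an accepting run reaching $q \in \Acc$. By Proposition~\ref{prop:abstract run} applied to the $\alu$ abstraction, there is an abstract run $(q_0,W_0) \tto_{\abssim} (q_1,W_1) \tto_{\abssim} \cdots \tto_{\abssim} (q_n,W_n)$ with $q_n = q$ and $W_n \neq \es$. The plan is to trace this abstract run through $G$ by induction, maintaining the invariant that after $i$ steps we are at a non-tentative node $(q_i,Z_i,L_iU_i)$ of $G$ with $W_i \incl \aluI(Z_i)$ --- note the direction of inclusion is the reverse of the easy direction. The base case uses G1 to find the initial node; since $W_0 = Z_0$, the inclusion holds. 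For the inductive step, from $(q_i,Z_i,L_iU_i)$ we must realize the transition $t_{i+1}$. The essential subtlety is that the transition $t_{i+1}$ is enabled from $W_i = \aluI(W_i)$, and since $W_i \incl \aluI(Z_i)$ it is enabled from $\aluI(Z_i)$; by the contrapositive of invariant I1, $t_{i+1}$ cannot be disabled from $(q_i,Z_i)$, so there genuinely is a symbolic transition $(q_i,Z_i) \tto_{t_{i+1}} (q_{i+1}, Z_{i+1}')$. If the current node is non-tentative, G2 supplies a successor $(q_{i+1}, Z_{i+1}', L_{i+1}U_{i+1})$; invariant I2 then gives $\post_{t_{i+1}}(\aluI(Z_i)) \incl \aluI[L_{i+1}U_{i+1}](Z_{i+1}')$, and since $W_{i+1} = \alu(\post_{t_{i+1}}(W_i)) \incl \alu(\post_{t_{i+1}}(\aluI(Z_i)))$ --- here one uses monotonicity of $\post_t$ and of $\alu$, plus idempotence/the abstraction property --- we get $W_{i+1} \incl \aluI[L_{i+1}U_{i+1}](Z_{i+1}')$, restoring the invariant.

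\paragraph{The main obstacle: tentative nodes in the forward trace.} The step I expect to be delicate is handling tentative nodes encountered during the forward induction in the hard direction. If $(q_i, Z_i, L_iU_i)$ is tentative, G2 does not apply; instead G3 gives a non-tentative covering node $(q_i, Z_i^c, L_i^cU_i^c)$ with $Z_i \incl \aluI[L_i^cU_i^c](Z_i^c)$. I would then need $W_i \incl \aluI(Z_i) \incl \aluI[L_i^cU_i^c](Z_i^c)$ --- but this chained inclusion requires $\aluI(Z_i) \incl \aluI[L_i^cU_i^c](Z_i^c)$, which does \emph{not} follow from $Z_i \incl \aluI[L_i^cU_i^c](Z_i^c)$ alone unless we also know $L_i^cU_i^c \le L_iU_i$, i.e.\ invariant I3, combined with monotonicity of the abstraction in its bound parameters and the fact that $\alu$ applied to an already-$\alu$-closed set (with smaller bounds) does not grow it. This is precisely why I3 is stated as an invariant, and getting the inclusion bookkeeping exactly right --- tracking which bound functions decrease and ensuring every application of $\alu$ composes correctly --- is the crux of the argument. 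Once at the covering node, which is non-tentative, the induction continues as before; termination of this redirection is not an issue because the abstract run has fixed finite length $n$ and each original step advances the index $i$.
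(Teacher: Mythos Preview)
Your approach is essentially the paper's---both directions use I1, I2, I3 in exactly the places you indicate, and your identification of the tentative-node redirection via G3 and I3 as the crux of the hard direction is correct. However, you make both directions more complicated than necessary, and in one place this creates a genuine imprecision.

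For the right-to-left direction: by condition G2 the edges of $G$ \emph{are} symbolic transitions $(q,Z)\tto_t(q',Z')$, and tentative nodes have no outgoing edges at all, so no path in $G$ can pass \emph{through} a tentative node. Hence any path in $G$ from the initial node to $(q,Z,LU)$ with $Z\neq\es$ is already a symbolic run once you drop the third component, and the existence of an accepting run follows immediately. The paper does this in one sentence; your splicing argument and the detour through abstract runs are not needed here.

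For the left-to-right direction: invoking Proposition~\ref{prop:abstract run} commits you to a single \emph{fixed} abstraction $\alu$, yet you never say which $LU$ bounds define it---and nothing in the ASG definition forces the per-node bounds to lie below any global choice, so the step ``$\alu(\Post_t(\ALU{L_iU_i}(Z_i)))\subseteq \ALU{L_{i+1}U_{i+1}}(Z'_{i+1})$'' is not justified as written. The paper avoids this entirely by tracing the plain \emph{symbolic} run $(q_0,Z_0)\tto\cdots\tto(q,Z)$ through $G$, maintaining the invariant that there is a non-tentative node $(q_i,Z^G_i,L_iU_i)$ with $Z_i\subseteq\ALU{L_iU_i}(Z^G_i)$. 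The inductive step is then exactly your use of I1 (contrapositive: $t$ must be enabled from $Z^G_i$), I2 (push the inclusion through $\Post_t$), and G3 together with I3 (redirect from a tentative successor to its covering node)---but with no extraneous global abstraction operator to reconcile against the per-node ones.
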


Recall from Proposition~\ref{prop:abstract run} that there is an
accepting run of $\Aa$ iff there 
is a sequence of symbolic transitions
\begin{equation}\label{eqn:zone path}
  (q_0,Z_0)\tto (q_1,Z_1)\tto\dots\tto (q,Z)
\end{equation}
with $q\in \Acc$ and $Z\not=\es$.

For the right-to-left direction of the theorem we take a path in $G$ 
leading from $(q_0,Z_0,L_0U_0)$ to $(q,Z,LU)$. By definition, removing
the third component gives us a path as in~\eqref{eqn:zone path}.

The opposite direction is proved with the help of the following lemma.

\begin{lemma}
  Let $(q,Z)$ be as in~\eqref{eqn:zone path}. There exists a non
  tentative node $(q,Z_{1},L_1U_1)$ in $G$ such that $Z \subseteq
  \aluOne(Z_1)$.
\end{lemma}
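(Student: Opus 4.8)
The plan is to prove the lemma by induction on the length of the symbolic run~\eqref{eqn:zone path}, producing for each prefix a non-tentative node of $G$ whose zone (after $\alu$-abstraction with that node's bounds) contains the corresponding symbolic zone. The base case uses G1 to get a node $(q_0,Z_0,L_0U_0)$; if it happens to be tentative, G3 gives a covering non-tentative node $(q_0,Z',L'U')$ with $Z_0\subseteq\aluprime(Z')$, and since $Z_0$ is already the initial zone this is fine (note $\alu(Z_0)\supseteq Z_0$ always, by reflexivity of $\lu$, so $Z_0\subseteq\alu(Z_0)$ holds trivially at the start). The inductive claim I maintain is slightly stronger than just ``$Z_i\subseteq\mathfrak{a}(Z)$'': I want a non-tentative node $(q_i,\ti Z_i,\ti L_i\ti U_i)$ with $Z_i\subseteq\mathfrak{a}_{\ti L_i\ti U_i}(\ti Z_i)$.

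For the inductive step, suppose we have a non-tentative node $n=(q_i,\ti Z_i,\ti L_i\ti U_i)$ with $Z_i\subseteq\mathfrak{a}_{\ti L_i\ti U_i}(\ti Z_i)$, and the run continues $(q_i,Z_i)\tto_t(q_{i+1},Z_{i+1})$. First I must argue that $t$ is actually enabled from $(q_i,\ti Z_i)$: since $Z_i\subseteq\mathfrak{a}_{\ti L_i\ti U_i}(\ti Z_i)$ and $Z_i\neq\es$ (it carries a witness for a prefix of the accepting run, hence is nonempty by Proposition on symbolic runs / soundness), $\mathfrak{a}_{\ti L_i\ti U_i}(\ti Z_i)$ is nonempty and $t$ is enabled from it; by contraposition of invariant I1, $t$ is enabled from $(q_i,\ti Z_i)$. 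So by G2 the node $n$ has a successor $(q_{i+1},\ti Z_{i+1},\ti L_{i+1}\ti U_{i+1})$ along $\tto_t$. Now I use invariant I2 on this edge: $\post_t(\mathfrak{a}_{\ti L_i\ti U_i}(\ti Z_i))\subseteq\mathfrak{a}_{\ti L_{i+1}\ti U_{i+1}}(\ti Z_{i+1})$. Combined with monotonicity of $\post_t$ and the induction hypothesis $Z_i\subseteq\mathfrak{a}_{\ti L_i\ti U_i}(\ti Z_i)$, we get $Z_{i+1}=\post_t(Z_i)\subseteq\post_t(\mathfrak{a}_{\ti L_i\ti U_i}(\ti Z_i))\subseteq\mathfrak{a}_{\ti L_{i+1}\ti U_{i+1}}(\ti Z_{i+1})$. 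Finally, if this successor node is tentative, apply G3 to replace it by its covering non-tentative node $(q_{i+1},Z',L'U')$ with $\mathfrak{a}_{\ti L_{i+1}\ti U_{i+1}}(\ti Z_{i+1})\subseteq\mathfrak{a}_{\ti L_{i+1}\ti U_{i+1}}(\mathfrak{a}_{L'U'}(Z'))$; here invariant I3 ($L'U'\le\ti L_{i+1}\ti U_{i+1}$) together with monotonicity of $\mathfrak{a}$ in the zone argument and idempotence/monotonicity in the bounds gives $\mathfrak{a}_{\ti L_{i+1}\ti U_{i+1}}(\mathfrak{a}_{L'U'}(Z'))\subseteq\mathfrak{a}_{L'U'}(Z')$, so $Z_{i+1}\subseteq\mathfrak{a}_{L'U'}(Z')$ and the new non-tentative node satisfies the invariant. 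Applying this to $i=n$ yields the claimed node $(q,Z_1,L_1U_1)$.

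The main obstacle I anticipate is the bookkeeping around G3 and I3 — specifically establishing the composition fact that abstracting with bigger bounds $\ti L\ti U$ after abstracting with smaller bounds $L'U'$ does not enlarge the set beyond $\mathfrak{a}_{L'U'}(Z')$. This needs a small lemma: if $L'U'\le LU$ then $\mathfrak{a}_{LU}(\mathfrak{a}_{L'U'}(W))\subseteq\mathfrak{a}_{L'U'}(W)$, which follows because $\fleq_{L'U'}\subseteq\fleq_{LU}$ when $L'U'\le LU$ (a valuation closer than a larger bound is closer than a smaller one — reading Definition~\ref{def:lu-preorder}, the condition $v'(x)>L'_x$ is weaker... actually one must check the direction carefully, and that $\mathfrak{a}_{L'U'}(W)$ is already closed under $\fleq_{L'U'}$, i.e. $\fleq_{L'U'}$ is transitive, which holds since it is a simulation preorder). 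One has to be a little careful that ``$\le$'' on bound functions points the right way for the preorders to nest; I expect this is exactly the point the definition of I3 was crafted to make work. The other routine points — nonemptiness of the $Z_i$, monotonicity of $\post_t$ and of $\mathfrak{a}$ in its set argument, reflexivity of $\fleq_{LU}$ giving $W\subseteq\mathfrak{a}_{LU}(W)$ — I would state briefly without belaboring them.
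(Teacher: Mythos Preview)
Your proof is correct and follows the same inductive argument as the paper: invariant I1 (contrapositive) for enabledness, G2 for the successor, I2 together with monotonicity of $\Post_t$ for propagating the inclusion, and G3 with I3 for passing to a non-tentative covering node. On the point you flagged as uncertain, the correct direction is $\fleq_{LU}\subseteq\fleq_{L'U'}$ when $L'U'\le LU$ (smaller bounds give a coarser preorder), and this is precisely what makes $\mathfrak{a}_{LU}(\mathfrak{a}_{L'U'}(W))\subseteq\mathfrak{a}_{L'U'}(\mathfrak{a}_{L'U'}(W))=\mathfrak{a}_{L'U'}(W)$ go through.
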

\begin{proof}
  The lemma is vacuously true for $(q_0, Z_0)$. Assume that the
  hypothesis is true for a path as in~\eqref{eqn:zone path}.  We prove
  that the lemma is true for every symbolic successor of $(q,Z)$.

  Let $(q,Z) \tto^t (q', Z')$ be a symbolic transition of $\Aa$. The
  transition $\tto^t$ should be enabled from $(q, Z_1)$. This
  is because if it was disabled, by Invariant 1, we would have that it
  is disabled from $\aluOne(Z_1)$ and from the hypothesis, it should
  be disabled from $(q,Z)$ too leading to a contradiction.

  So we have a transition $(q, Z_1, L_1U_1) \tto^t (q', Z_1',
  L'_1U'_1)$ in $G$. From Invariant 2, we have
  $\post(\aluOne(Z_1)) \incl \aluOneP(Z'_1)$. This leads to the
  following sequence of implications.

  \begin{align*}
    & & Z & ~\incl~ \aluOne(Z_1) \qquad \quad \text{ induction hypothesis} \\
    & \imp & \post(Z) & ~\incl~ \post(\aluOne(Z_1)) \\
    & \imp & \post(Z) & ~\incl~ \aluOneP(Z'_1) \qquad \quad \text{ by
      Invariant 2}\\
    & \imp & Z' & ~\incl~ \aluOneP(Z'_1)
  \end{align*}

  If $(q', Z'_1, L'_1U'_1)$ is a non-tentative node, then we are
  done. Suppose it is a tentative node, then we know that there exists
  a non-tentative node $(q', Z'_2, L'_2U'_2)$ such that $Z_1' \incl
  \aluTwoP(Z_2')$. From Invariant 3, we also know that $L_2'U_2' \le
  L_1'U_1'$. This shows that $Z' \incl \aluTwoP(Z'_2)$.

  Hence the node corresponding to $(q',Z')$ is $(q', Z_2',
  L_2'U_2')$.\qed
\end{proof}

%%% Local Variables: 
%%% mode: latex
%%% TeX-master: "m"
%%% End: 

% LocalWords:  reachability ASG invariants L'U iff eqn eq
% LocalWords:  algorithmically

\section{Algorithm}\label{sec:algo}

Our aim is to construct a small adaptive simulation graph for a given
timed automaton.
For this the algorithm will try to keep $LU$ bounds as small as
possible but still satisfy the invariants I1, I2, I3. 
The bounds are calculated dynamically while constructing an adaptive
simulation graph. 
For example, the invariant I1 requires $LU$ in a node to be sufficiently
big so that the transition remains disabled. Invariant I2 tells that
$LU$ bound in a node should depend on $LU$ bounds in the successors of
the node. 

\emph{Proviso:}\label{proviso} For simplicity of the algorithm
presented in this section we assume a special form of transitions of
timed automata. A transition can have either only upper bound guards,
or only lower bound guards and no resets. Observe that a transition
$q_1\xra{g;R} q_2$ is equivalent to $q_1\xra{g_L}
q'_1 \xra{g_U;R} q_2$; where $g_L$ is the conjunction of the lower
bound guards from $g$ and $g_U$ is the conjunction of the upper bound
guards from $g$.  
\begin{lemma}
  Suppose $W_1$ is a time elapsed set of valuations. If
  \begin{equation*}
    (q_1,W_1)\tto_{g;R}(q_2,W_2)\quad \text{and}\quad
    (q_1,W_1)\tto_{g_L}(q'_1,W'_1)\tto_{g_U;R}(q_2,W'_2)
  \end{equation*}
then $W_2=W'_2$.
\end{lemma}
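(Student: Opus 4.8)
The plan is to unfold the definition of $\Post$ for both the single transition $t = (q_1, g, R, q_2)$ and for the two-step decomposition, and then show the resulting sets of valuations are equal. Recall from Definition~\ref{def:transition-relation} that $W_2 = \{v_2 \mid \exists v \in W_1,\ \exists \d \in \Rpos.\ (q_1,v) \to^t \to^\d (q_2, v_2)\}$, which by the definition of the action transition $\to^t$ means: there is $v \in W_1$ with $v \sat g$, and $v_2 = ([R]v) + \d$ for some $\d \ge 0$. Since $g = g_L \wedge g_U$, satisfying $g$ is the same as satisfying both $g_L$ and $g_U$, so the only real content is to check that inserting an intermediate time-elapse step (hidden inside the first $\tto$) and then splitting off $g_L$ from $g_U$ does not change the final set.

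First I would compute $W_2'$ explicitly. We have $(q_1, W_1) \tto_{g_L} (q_1', W_1')$ where, because the transition $g_L$ has no resets, $W_1' = \{ v' + \d' \mid v' \in W_1,\ v' \sat g_L,\ \d' \ge 0\}$. Then $(q_1', W_1') \tto_{g_U;R} (q_2, W_2')$ gives $W_2' = \{ ([R]u) + \d \mid u \in W_1',\ u \sat g_U,\ \d \ge 0 \}$. Expanding, $W_2'$ consists of all valuations of the form $([R](v' + \d')) + \d$ where $v' \in W_1$, $v' \sat g_L$, $\d' \ge 0$, $v' + \d' \sat g_U$, and $\d \ge 0$.

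The key step is the inclusion $W_2' \subseteq W_2$ and its converse. For $W_2 \subseteq W_2'$: given $v_2 = ([R]v) + \d$ with $v \in W_1$, $v \sat g$, $\d \ge 0$, take $v' = v$, $\d' = 0$; then $v' \sat g_L$, and $v' + \d' = v \sat g_U$, so $v_2 \in W_2'$. This direction does not even use that $W_1$ is time-elapsed. The converse inclusion is where the hypothesis matters. Given a point of $W_2'$ as above, I want to realize it as $([R]\tilde v) + \tilde\d$ for some $\tilde v \in W_1$ with $\tilde v \sat g$. The natural candidate is $\tilde v = v' + \d'$: it satisfies $g_U$ by assumption, and since $g_L$ is a conjunction of lower-bound guards $x \ggeq c$, increasing clock values preserves their satisfaction, so $v' \sat g_L$ implies $v' + \d' \sat g_L$, hence $\tilde v \sat g$. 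Now $\tilde v = v' + \d' \in W_1$ precisely because $W_1$ is closed under time-successors — this is exactly the point where the time-elapsed hypothesis on $W_1$ is essential. Then $([R]\tilde v) + \d = ([R](v'+\d')) + \d$, which is the given point of $W_2'$, so it lies in $W_2$.

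I expect the main (and only real) obstacle to be making precise the monotonicity fact that lower-bound guards are preserved under time elapse, together with being careful that $[R](v' + \d')$ genuinely equals $[R]((v' + \d'))$ in the two-step computation — i.e.\ that the reset in the composed transition is applied to the already-time-elapsed valuation, matching the placement of $R$ in $g_U; R$. Once these bookkeeping points are pinned down, the two inclusions close the argument. I would also remark briefly why the asymmetry is correct: we push the delay $\d'$ \emph{before} the reset in the decomposition, which is legitimate for lower-bound guards but would fail for upper-bound guards — hence the proviso that the first sub-transition carries $g_L$ and no resets while the second carries $g_U$ and $R$.
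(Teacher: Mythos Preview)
Your proof is correct and follows essentially the same approach as the paper's: for the nontrivial inclusion $W'_2\subseteq W_2$ you set $\tilde v = v'+\d'$, use that lower-bound guards are preserved under delay to get $\tilde v\sat g_L$, and use that $W_1$ is time-elapsed to get $\tilde v\in W_1$. The paper omits the easy inclusion $W_2\subseteq W'_2$ and gives exactly this argument for the other one, so your proposal is a slightly more detailed version of the same proof.
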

\begin{proof}
  We consider only, more complicated, inclusion $W'_2\incl W_2$.
  Take $v'_2\in W'_2$. By definition we know that there is $v_1\in
  W_1$ such that 
  \begin{equation*}
    (q_1,v_1)\xra{g_L}(q'_1,v_1+\d_1)\xra{g_U;R}(q_2,(v_1+\d_1)[R]+\d_2)
  \end{equation*}
  and $v_2=(v_1+\d_1)[R]+\d_2$. We get then
  \begin{equation*}
    (q_1,v_1+\d_1)\xra{g_L}(q'_1,v_1+\d_1)\xra{g_U;R}(q_2,(v_1+\d_1)[R]+\d_2)
  \end{equation*}
  So $(q_1,v_1+\d_1)\xra{g;R}(q_2,(v_1+\d_1)[R]+\d_2)$. As $W_1$
  is time elapsed, $v_1+\d_1\in W_1$. This shows $v_2\in W_2$, by
  definition of $W_2$. \qed
\end{proof}
 So in order to satisfy our proviso we may need to
double the number of states of an automaton.

{
  \setlength{\columnsep}{35pt}
  \setlength{\columnseprule}{0.5pt}
\begin{lstlisting}[mathescape=true, float=tp, caption={ Reachability
    algorithm with on-the-fly bound computation and $\alu$ abstraction.},
label=fig:algorithm]
function main():
    let $v_{root}$ be the root node with $v_{root}.q = q_0$ and $v_{root}.Z = Z_0$
    add $v_{root}$ to the stack
    while (stack $\neq$ $\es$) do
        remove $v$ from the stack
        explore($v$)
        resolve()
    return "empty"

procedure explore($v$):
    if ($v.q$ is accepting)
        exit "not empty"
    if ($\exists$ $v''$ nontentative s.t. $v.q=v''.q$ and $v.Z\subseteq\ALU{v''.LU}(v''.Z)$)
        mark $v$ tentative wrt $v''$
        $v.LU$ := $v''.LU$
        $(X_L,X_U)$:= active clocks in $v.LU$
        propagate($v$,$X_L$,$X_U$)
    else
        $v.LU$ := disabled($v.q$,$v.Z$)
        $(X_L,X_U)$:= active clocks in $v.LU$
        propagate($v$,$X_L$,$X_U$) 
        for each $(q',Z')$ s.t. $(v.q, v.Z) \tto (q',Z')$ and $Z'\not=\es$ do
          create $v'$ the successor of $v$ with $v'.q = q'$ and $v'.Z = Z'$
          explore($v'$)

function disabled($q$,$Z$)
    examine transitions from $q$ that are disabled from $Z$ and
    choose $LU$ so that invariant $I1$ is satisfied
    return($LU$);
  
procedure resolve():
    for each $v$ tentative w.r.t. $v'$ do
        if $v.Z \not\incl\ALU{v'.LU}(v'.Z)$
            mark $v$ nontentative
            set $v.L$ and $v.U$ to $-\infty$ // clear the bounds in $v$
            add $v$ to stack

procedure propagate($v'$,$X'_L$,$X'_U$):
    $v$=parent($v'$);
    $LU$ := newbounds($v$,$v'$,$X'_L$,$X'_U$)
    if ($LU\not= v.LU$)
        for each $v_t$ tentative wrt $v$ do
            $(X^t_L,X^t_U)$ clocks modified in $LU$ wrt $v_t.LU$.
            $v_t.LU$ := $LU$; 
            propagate($v_t$,$X^t_L$,$X^t_U$)
        if ($v\not= v_{root}$) then
            ($X_L$,$X_U$) clocks modified in $LU$ wrt $v.LU$
            propagate($v$,$X_L$,$X_U$)

function newbounds($v$,$v'$,$X'_L$,$X'_U$)
 given a transition $v\to v'$, find new $LU$ bounds for $v$ knowing
 that $LU$ bounds for $v'$ have changed, and
 $X'_L$ are the clocks whose $L$ bound has changed,
 $X'_U$ are the clocks whose $U$ bound has changed 
\end{lstlisting}
}

%%% Local Variables: 
%%% mode: latex
%%% TeX-master: "../m"
%%% End: 

Algorithm~\ref{fig:algorithm} presented below, computes a tree whose
nodes $v$ have four components: $v.q$ is a state of $\Aa$, $v.Z$ is a
zone, and $v.L$, $v.U$ are $LU$ bound functions.  Each node $v$ has a
successor $v_t$ for every transition $t$ of $\Aa$ from $(v.q,v.Z)$ resulting
in a non-empty zone. Some nodes will be marked tentative and not
explored further. After an exploration phase, tentative nodes will be
reexamined and some of them will be put on the stack for further
exploration. At every point the leaves of the tree constructed by the
algorithm will be of three kinds: tentative nodes, nodes on the stack,
nodes having no transition needed to be explored.

Our algorithm starts from the root node $v_{root}$ labeled with $q_0$
and $Z_0$: the initial state of $\Aa$, and the initial zone. We do not
set the $LU$ bounds for $v_{root}$ as this will be done by
$\verb|explore|$ procedure. The main loop repeatedly alternates an
\emph{exploration} and a \emph{resolution} phases until there are no
nodes to be explored. The exploration phase constructs a part of ASG
from a given node stopping at nodes that it considers tentative. 
During exploration $LU$ bounds of some nodes may be changed in order to 
preserve invariants I2 and I3.
The resolution phase examines tentative nodes and adds them to the
stack for exploration if condition G3 of the definition of ASG is no
longer satisfied.

At the call of the procedure $\verb|explore|(v)$, node $v$ is supposed
to have its state $v.q$ and zone $v.Z$ set but the value of $v.LU$ is
irrelevant. 
The zone  $v.Z$ is supposed to be not empty. We assume that the
constructed tree satisfies the invariants I1, I2, I3, but for the node
$v$ and the nodes on the stack. The goal of the $\verb|explore|$
procedure is to restore the invariant for $v$ and start exploration of
successors of $v$ if needed.

First, the procedure checks if $v.q$ is an accepting state. If so then
we know that this state is reachable since we assume that $v.Z$ is not
empty.  When $v.q$ is not accepting we consider two cases.  If there
exists a $non-tentative$ node $v''$ in the current tree such that
$v.q=v''.q$ and $v.Z\subseteq\ALU{v''.LU}(v''.Z)$) then $v$ is a
tentative node. The $LU$-bounds from $v''$ are copied to $v$, and
propagated so that invariant $I2$ is restored. This is the task of
$\verb|propagate|$ procedure that we describe below. If $v$ is not
covered then it should be explored. First, we compute its $LU$ bound
based on transitions that are disabled from $v$. The task of
function $\verb|disabled|$ is to calculate the $LU$ bounds so that the
invariant I1 holds. (The function is described in more detail in the next
section.) Then we propagate these bounds in order to restore
the invariant I2. Finally, we explore from every successor of $v$.

When $LU$ bounds in a node $v'$ are changed the invariant I2 should be
restored. For this the bounds are propagated by invoking
$\verb|propagate|$ procedure. For efficiency, the procedure is also
given the set of clocks $X'_L$ whose $L$ bound has changed, and the set
$X'_U$ of clocks whose $U$ bound has changed. The parent $v$ of $v'$
is taken and the transition from $v$ to $v'$ is examined. 
The function $\verb|newbounds|$ calculates new $LU$ bounds for a node
given the changes in its successor. This function is the core of our
algorithm and is the subject of the next section.
Here it is enough to assume that the new bounds are such that the
invariant I2 is satisfied. If the bounds of $v$ indeed change then
they should be copied to all nodes tentative with respect to $v$. This
is necessary to satisfy the invariant I3. Finally, the bounds are
propagated to the predecessor of $v$ to restore invariant I2.

The exploration phase terminates as in the \verb|explore| procedure
the bound functions in each node never decrease and are bounded.  They
are bounded because $\verb|newbounds|$ function never gives bounds
bigger than those obtained by static analysis (cf.\ Equation~\eqref{eq:static})

After exploration phase $LU$ bounds of tentative nodes may change. The
procedure \verb|resolve| is called to check for the consistency of
$tentative$ nodes. If $v$ is tentative w.r.t.\ $v'$ but $v.Z
\not\incl\ALU{v'.LU}(v'.Z)$ is not true anymore, $v$ needs to be
explored. Hence it is viewed as a new node, and put on the $stack$ for
further consideration in the function \verb|main|.

The algorithm terminates when either it finds and accepting state, or
there are no nodes to be explored and all tentative nodes remain
tentative. 
In the second case we can conclude that the constructed tree
represents an ASG, and hence no accepting state is reachable. 
Note that the overall algorithm should terminate as the bounds can
only increase and bounds in a node $(q,Z)$ are not bigger than the
bounds obtained for $q$ by static analysis (cf.\ Remark on
page~\pageref{rem:asg-finite}).

From the above discussion it follows that the algorithm returns
``\emph{empty}'' only when it constructs a complete ASG.  
The correctness of the algorithm then follows from
Theorem~\ref{thm:algo-correct}. 
\begin{proposition}\label{thm:alg correct}
  The algorithm always terminates. If for a given $\Aa$ the result is
  "not empty'' then $\Aa$ has an accepting run. Otherwise the
  algorithm returns empty after constructing ASG for $\Aa$ and not
  seeing an accepting state. 
\end{proposition}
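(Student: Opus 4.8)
The plan is to establish the three assertions separately: termination, soundness of the ``not empty'' answer, and soundness of the ``empty'' answer.

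For termination, I would argue in two layers. First, within a single exploration phase, observe that every call to \verb|newbounds| and \verb|disabled| only ever returns bounds that are bounded above by the static-analysis bounds of Equation~\eqref{eq:static} — this is asserted in the text as a property of these functions and I would take it as given. Since \verb|propagate| only changes a node's bounds when they strictly increase (the test \verb|if|~($LU\neq v.LU$)\ combined with monotonicity of \verb|newbounds|), and bounds lie in the finite lattice of functions bounded by the static bounds, the total number of bound changes during one exploration phase is finite; the recursion in \verb|explore| then terminates because the underlying tree of zones explored from a node is finite (the set of zones reachable under $\tto$ from $(v.q,v.Z)$ that pass the coverage test is finite, since every uncovered node's $LU$ is bounded by the static bounds and $\alu$ with those bounds yields a finite abstraction). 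Second, across exploration phases: each phase is triggered by \verb|resolve| putting a previously tentative node back on the stack, which happens only when that node's bounds were cleared to $-\infty$, i.e. only after some covering node's bounds strictly increased. Since bounds across the whole run are monotone non-decreasing and capped by the static bounds, only finitely many such re-openings can occur. Hence \verb|main| halts.

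For the ``not empty'' case: the algorithm outputs ``not empty'' only inside \verb|explore|$(v)$ when $v.q$ is accepting, and at that point the invariant $v.Z\neq\es$ holds (maintained because successors are created only for non-empty zones $Z'$, and the root zone $Z_0$ is non-empty). So there is a path in the constructed tree from $v_{root}$ to $v$; dropping the $LU$ component gives a symbolic run $(q_0,Z_0)\tto\cdots\tto(v.q,v.Z)$ with $v.q\in\Acc$ and $v.Z\neq\es$. By Proposition~\ref{prop:abstract run} (or the preceding Proposition on symbolic runs), $\Aa$ has an accepting run.

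For the ``empty'' case: here I would show that when \verb|main| exits the while loop normally, the constructed tree, read as a graph with node labels $(v.q,v.Z,v.LU)$, is an ASG in the sense of Definition~\ref{df:asg}, after which Theorem~\ref{thm:algo-correct} finishes the job. I would verify G1 (the root supplies $(q_0,Z_0,LU)$), G2 (a non-tentative, non-accepting node has all its $\tto$-successors with non-empty zones created in the \verb|explore| loop, and these are never removed), and G3 (a tentative node $v$ w.r.t.\ $v''$ satisfies $v.q=v''.q$ and $v.Z\subseteq\ALU{v''.LU}(v''.Z)$ — this held when $v$ was marked, and any subsequent bound change on $v''$ would have been caught by \verb|resolve|, which only leaves $v$ tentative when the inclusion still holds). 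For the invariants: I1 holds at every non-tentative node because its $LU$ was produced by \verb|disabled|, which by specification makes I1 hold, and bounds only grow afterwards, which preserves disabledness of a transition under $\alu$ (larger bounds give a smaller, hence still disabled, abstraction). I2 holds on every edge because \verb|propagate| is invoked after every bound change at a child and \verb|newbounds| is specified to restore $\post_t(\alu(Z))\subseteq\aluP(Z')$; one must check propagation reaches quiescence, which it does by the termination argument. I3 holds because whenever a tentative node's covering node changes bounds, \verb|propagate| copies those bounds to the tentative node, and when a node is freshly marked tentative it copies $v''.LU$ verbatim, so $v_t.LU = v''.LU$ throughout, giving $L_2U_2 = L_1U_1 \le L_1U_1$.

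The main obstacle I expect is the invariant-maintenance bookkeeping in the ``empty'' case, specifically showing that I2 and I3 are simultaneously restored after the cascade of \verb|propagate| calls: a bound change at a node forces copies into all its tentative dependents and a recursive call to its parent, so one needs that this mutual recursion terminates with all edges satisfying I2 and all tentative/covering pairs satisfying I3, rather than chasing its own tail. This reduces to the finiteness-of-bound-changes argument already used for termination, applied to the closure of a single \verb|propagate| call; but stating it cleanly — e.g. by a well-founded measure on the multiset of gaps between current bounds and static bounds — is where the real care is needed. The rest is routine once one accepts the black-box specifications of \verb|disabled| and \verb|newbounds| stated in the text, whose detailed justification is deferred to Section~\ref{sec:newbounds}.
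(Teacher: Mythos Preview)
Your proposal is correct and follows essentially the same approach as the paper: termination via monotone bounds capped by static analysis, soundness of ``not empty'' via the symbolic-run proposition, and soundness of ``empty'' by checking that the final structure is an ASG and invoking Theorem~\ref{thm:algo-correct}. The paper's own proof is in fact just the informal discussion preceding the proposition; your write-up is considerably more detailed (explicitly verifying G1--G3 and I1--I3 and giving a two-layer termination argument) but does not diverge from it in substance.
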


% \todo{rewrite} It could seem that the algorithm will be forced to do a
% high number of propagations of bounds. The experiments reported in
% Section~\ref{sec:experiments} show that the present very simple
% approach to bound propagation is good enough. Since we propagate the
% bounds as soon as they are modified, most of the time, the value of
% $LU$ does not change in line~31 of function \verb|propagate|. In
% general, bounds are only propagated on very short distances in the
% tree, mostly along one single edge. For this reason we do not
% concentrate on optimizing the function \verb|propagate|. In the
% implementation we use the presented function augmented with a minor
% ``optimization'' that avoids calculating maximum over all successors
% in line~31 when it is not needed.

%%% Local Variables: 
%%% mode: latex
%%% TeX-master: "m"
%%% End: 

% LocalWords:  invariants ASG newbounds eq

\section{Controlling  $LU$-bounds}\label{sec:newbounds}
The notion of adaptive simulation graph (Definition~\ref{df:asg})
gives necessary conditions for the values of $LU$ bounds in every node. 
The invariant I1 tells that $LU$ bounds in a node should take into account the
the edges disabled from the node. 
The invariant I2 gives a lower bound on $LU$ with respect to the $LU$-bounds 
in successors of the node. 
Finally, I3 tells us that $LU$ bounds in a covered node should be 
not smaller than in the covering node. 
The algorithm from the last section implements a construction of ASG with 
updates of the bounds when the required by the invariant.

The three invariants sometimes allow for  much smaller $LU$-bounds
than that obtained by static analysis. 
A very simple example is when the algorithm does not 
encounter a node with a disabled edge. 
In this case all $LU$-bounds are simply $-\infty$, since no bound
is increased due to I1, and such bounds are not changed by propagation. 
When $LU$ bounds are $-\infty$, $\alu$ abstraction of a zone results
in the set of all valuations.
So in this case ASG can be just a subgraph of the automaton.
A more interesting examples of important gains are discussed  in the next
section.  

In this section we describe two central functions of the proposed
algorithm: $\verb|disabled|$ and $\verb|newbounds|$. 
The pseudo-code is presented in Algorithm~\ref{fig:bounds}.

The $\verb|disabled|$ function is quite simple. Its task is to 
restore the invariant I1. For this it chooses
from every disabled transition an atomic guard that makes it
disabled. Recall that we have assumed that every guard contains either
only lower bound constraints or only upper bound constraints. A
transition with only lower bound constraints cannot be
disabled. Hence a guard on a disabled transition must be a conjunction
of upper bound constraints. It can be shown that if such a
guard is not satisfied in a zone then there is one atomic constraint
that is not satisfied in a zone. Now it suffices to observe that if a
guard $x\leq d$ or $x< d$ is not satisfied in $Z$ then it is not
satisfied in $\alu(Z)$ when $U(x)=d$. This follows directly from the
definition of LU-simulation (Definition~\ref{def:lu-preorder}).

For the rest of this section we focus on the description of the function
$\verb|newbounds|(v,v',X'_L,X'_U)$. This function calculates
new $LU$-bounds for $v$, given that the bounds in $v'$ have changed. As
an additional information we use the sets of clocks $X'_L$ and
$X'_U$ that have changed their $L$-bound, and $U$-bound respectively,
in $v'$. This information makes the function $\verb|newbounds|$
more efficient since the new bounds depend only on the clocks in
$X'_L$ and $X'_U$.  The aim is to give bounds that are as small as
possible and at the same time satisfy invariant I2 from
Definition~\ref{df:asg}. 

Recall that we have assumed that every transition has either only
upper bound guards, or only lower bound guards and no resets
(cf.~page~\pageref{proviso}).
This assumption will simplify the $\verb|newbounds|$ function.
We will first consider the case of transitions with just an atomic guard or
with just a reset. Next we will put what we have learned together to
treat the general case.

\subsection{Reset}\label{subsec:reset}
Consider a transition $(q,Z)\tto_{R} (q',Z')$ for the set of clocks
$R$ being reset. So we have $Z'=\elapse{Z[R:=0]}$, i.e., we reset the
clocks in $R$ and let the time elapse. Suppose that we have updated
$L'U'$ and now we want our $\verb|newbounds|$ function to compute
$L_{new}U_{new}$.  We let $L_{new}U_{new}$ be the maximum of $LU$ and
$L'U'$ but for $\Lnew(x)=\Unew(x)=-\infty$ for $x\in R$. We want to
show that invariant I2 holds that is:
\begin{equation*}
  \aluNew(Z)[R:=0]\incl \aluP(Z[R:=0]).  
\end{equation*}
To prove this inclusion, take a valuation $v\in \aluNew(Z)$. By
definition there is a valuation $v'\in Z$ with $v\luNew v'$. We obtain
that $v[R:=0]\luP v'[R:=0]$ using directly
Definition~\ref{def:lu-preorder}. Indeed, for every clock in $R$, its values
in the two valuations are the same. For other clocks the required
implications hold since $v\luNew v'$ and moreover the bounds
$\Lnew\Unew$ and $L'U'$ are the same for these clocks.

\subsection{An abstract formula for atomic guard case} Consider a transition $(q, Z, LU)
\tto^{g} (q', Z', L'U')$. Suppose that we have updated $L'U'$ and now
we want our $\verb|newbounds|$ function to compute $L_{new}U_{new}$. In the
standard constant propagation algorithm, we would have set
$L_{new} U_{new}$ to be the maximum over $LU$, $L'U'$ and the constant
present in the guard. This is sufficient to maintain Invariant
2. However, it is not necessary to always take the guard $g$ into
consideration for the propagation.

Let $L_gU_g$ be the bound function induced by the guard $g$. In our
case where there is only one constraint, there is only one constant
associated to a single clock by $L_gU_g$.  Roughly, in order to
maintain Invariant 2, it suffices to take
\begin{align}\label{eq:pre}
  L_{new}U_{new} = 
  \begin{cases}
    \max(LU, L'U') & \text{if $LU \ge L_gU_g$ or } \\
    & \text{if $\sem{g} \incl \aluP(Z')$ or} \\
    & \text{if $Z \incl \aluP(Z')$} \\
    \max(LU,L'U',L_gU_g) & \text{otherwise}
  \end{cases}
\end{align}

To see why the above should maintain Invariant 2, look at the
transition with the new bounds:
\begin{align*}
  (q, Z, L_{new}U_{new}) \tto^{g} (q', Z', L'U')
\end{align*}
Clearly from the above definition, $L_{new}U_{new} \ge L'U'$.

Additionally, if $L_{new}U_{new} \ge L_gU_g$, that is, if the constant
in the guard is incorporated in $L_{new}U_{new}$, it is easy to show
using definition of simulation that $\post_g(\aluNew(Z)) \incl
\aluP(Z')$.

We now need to show the same for the cases when $L_{new} U_{new}$ does
not incorporate the constant in the guard. From the definition of the
$\pre$, this happens only if either $g \incl \aluP(Z')$ or if $Z \incl
\aluP(Z')$. Let us look closely at what $\post_g(\aluNew(Z))$ is.
\begin{align*}
  \post_g(\aluNew(Z)) = \elapse{\aluNew(Z) \cap \sem{g}}
\end{align*}
If $\sem{g} \incl \aluP(Z')$, then $\aluNew(Z) \cap \sem{g}$ would be
included in $\aluP(Z')$. As $\aluP$ is closed under time-elapse, we
will have $\post_g(\aluNew(Z)) \incl \aluP(Z')$. Similarly if $Z \incl
\aluP(Z')$, we will have $\aluP(Z) \incl \aluP(Z')$ and as
$L_{new}U_{new} \ge L'U'$, we have $\aluNew(Z) \incl \aluP(Z')$. It
follows that $\post_g(\aluNew(Z)) \incl \aluP(Z')$.

\subsection{A concrete algorithm for atomic guard case}
Since bound propagation is called very often in the main algorithm, we
need an efficient test for the inclusions in Formula~\eqref{eq:pre}.
The formula requires us to test inclusion w.r.t. $\alu$ between $Z$
and $Z'$ each time we want to do the $\pre$. Although this seems
complicated at the first glance, note that $Z'$ is a zone obtained by
a successor computation from $Z$. When we have only a guard in the
transition, we have $Z' = \elapse{Z \land g}$. This makes the
inclusion test lot more simpler. We will also see that it is not
necessary to consider the inclusion $\sem{g} \incl \aluP(Z')$.

Before proceeding, we need to look closer how zones are
represented. One standard way to represent zones is using
difference bound matrices (DBMs)~\cite{Dill:AVMFSS:1989}. We will
consider an equivalent representation in terms of distance graphs.

A \emph{distance graph}  has clocks as vertices, with an additional
special clock $x_0$ representing the constant $0$. 
For readability, we will often write $0$ instead of $x_0$. Between
every two vertices there is an edge with a weight of the form $(\lleq,
c)$ where $c\in \mathbb{Z}$ and $\lleq$ is either $\leq$ or $<$; or
$(\lleq, c)$ equals $(<, \infty)$. An edge $x\act{\lleq c} y$
represents a constraint $y-x\lleq c$: or in words, the distance from
$x$ to $y$ is bounded by $c$. An example of a distance graph is
depicted in Fig.~\ref{fig:dist-graph}.

\begin{figure}[!t]
  \centering
  \begin{tikzpicture}
  % %%% Graph of (x-y>=1 && y<2)
  % % Nodes of the graph
  % \node (0) at (0,0) {$0$};
  % \node (x) at (2,0) {$x$};
  % \node (y) at (4,0) {$y$};
  % % Edges of the graph
  % \begin{scope}[->,thick,>=stealth]
  %   \tiny
  %   \draw (0) edge[bend left] node[above] {$(<,\infty)$} (x);
  %   \draw (0) edge[bend right=50] node[below] {$(<,2)$} (y);
  %   \draw (x) edge[bend left] node[below] {$(<,\infty)$} (0);
  %   \draw (x) edge[bend left] node[above] {$(\leq,-1)$} (y);
  %   \draw (y) edge[bend left] node[below] {$(<,\infty)$} (x);
  %   \draw (y) edge[bend right=50] node[above] {$(<,\infty)$} (0);
  % \end{scope}
  %%% Graph of (x-y>=1 && y<2 && x>4)
  \begin{scope}[xshift=0cm]
    % Nodes of the graph
    \node (0) at (0,0) {$0$};
    \node (x) at (2,0) {$x$};
    \node (y) at (4,0) {$y$};
    % Edges of the graph
    \begin{scope}[->,thick,>=stealth,bend angle=10]
      \tiny
      \draw (0) edge[bend left] node[above] {$(<,\infty)$} (x);
      \draw[rounded corners] (0.south) |- (2,-0.5) node[below]
      {$(<,2)$} -| (y.south);
      \draw (x) edge[bend left] node[below] {$(<,-4)$} (0);
      \draw (x) edge[bend left] node[above] {$(\le, -1)$} (y);
      \draw (y) edge[bend left] node[below] {$(<,\infty)$} (x);
      \draw[rounded corners] (y.north) |- (2,0.5) node[above]
      {$(<,\infty)$} -| (0.north);
    \end{scope}
  \end{scope}
\end{tikzpicture}

%%% Local Variables: 
%%% mode: latex
%%% TeX-master: "m"
%%% End: 
  \caption{Distance graph for the zone $(x-y \geq 1 \, \wedge \, y < 2
    \, \wedge \, x > 4)$.}
  \label{fig:dist-graph}
\end{figure}

Let $\sem{G}$ be the set of valuations of clock variables satisfying
all the constraints given by the edges of $G$ with the restriction
that the value of $x_0$ is $0$. 

One can define an arithmetic and order over the weights $(\lleq, c)$ in an
expected manner~\cite{bengtsson2004timed}. We recall only the
definition of order that is most relevant for us here
\begin{description}
\item \emph{Order} $(\lleq_1,c_1) < (\lleq_2,c_2)$ if either $c_1 <
  c_2$ or ($c_1 = c_2$ and $\lleq_1 = < $ and $\lleq_2 = \le$).
\end{description}

A distance graph is in \emph{canonical form} if the weight of the edge
from $x$ to $y$ is the lower bound of the weights of paths from $x$ to
$y$. For instance, the distance graph shown in
Figure~\ref{fig:dist-graph} is not in canonical form as the weight of
the edge $x \xra{} y$ is $(\le, -1)$ whereas there is a path $x \xra{}
0 \xra{} y$ whose weight is $(<, - 2)$. To convert it to canonical
form, it is sufficient to change the weight of the edge $x \xra{} y$
to $(<, - 2)$. 

For two distance graphs $G_1$, $G_2$ which are not necessarily in
canonical form, we denote by $\min(G_1,G_2)$ the distance graph where
each edge has the weight equal to the minimum of the corresponding
weights in $G_1$ and $G_2$.  Even though this graph may be not in
canonical form, it should be clear that it represents intersection of
the two arguments, that is,
$\sem{\min(G_1,G_2)}=\sem{G_1}\cap\sem{G_2}$; in other words, the
valuations satisfying the constraints given by $\min(G_1,G_2)$ are
exactly those satisfying all the constraints from $G_1$ as well as
$G_2$.

A zone $Z$ can be identified with the distance graph in the canonical form
representing the constraints in $Z$. For two clocks $x$, $y$ we write
$Z_{xy}$ for the weight of the edge from $x$ to $y$ in this graph. A
special case is when $x$ or $y$ is $0$, so for example $Z_{0y}$
denotes the weight of the edge from $0$ to $y$.

We recall a theorem from~\cite{lics} that permits to handle $Z\incl
\aluP(Z')$ test efficiently.
\begin{theorem}
  Let $Z$, $Z'$ be two non-empty zones. Then $Z \not\incl \aluP(Z')$
  iff there exist two clocks $x,y$ such that:
\begin{equation}\label{eq:alu-inc}
  Z_{x0} \ge (\le, -U'_x) \text{ and $Z'_{xy} < Z_{xy}$ and $Z'_{xy} +
    (<,-L'_y) < Z_{x0}$}
\end{equation}
\end{theorem}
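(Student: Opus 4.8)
The plan is to prove the two implications separately, in both cases passing to the canonical distance-graph (DBM) form of $Z$ and $Z'$ and unwinding the definitions of $\aluP$ and of the $L'U'$-preorder $\luP$ (Definition~\ref{def:lu-preorder}). A basic observation I use throughout is that, since $\luP$ contains no diagonal conditions, the set $\{\,v' : v \luP v'\,\}$ attached to a fixed valuation $v$ is a box $\prod_{z} A_z$, where each $A_z$ is one of the intervals $\{v(z)\}$, $(L'_z, v(z)]$, $[v(z),\infty)$ or $(L'_z,\infty)$ (intersected with the nonnegatives), the relevant case being selected by comparing $v(z)$ with $L'_z$ and $U'_z$.

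For the ``if'' direction I would construct an explicit witness $v \in Z \setminus \aluP(Z')$. The three inequalities tell me how to place $v$: I pick $v(x)$ to be a value $a$ that is realizable in $Z$ (i.e.\ not below the minimum of $v(x)$ on $Z$, which is read off $Z_{x0}$), that satisfies $a \le U'_x$ (here I use $Z_{x0}\ge(\le,-U'_x)$), and that satisfies $a + c' \le L'_y$ where $c'$ is the constant of $Z'_{xy}$ (here I use $Z'_{xy}+(<,-L'_y)<Z_{x0}$); and I make $v(y)-v(x)$ as large as $Z$ permits, i.e.\ essentially the constant of $Z_{xy}$, which by $Z'_{xy}<Z_{xy}$ exceeds $c'$ (in the appropriate strict/non-strict sense). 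The one point that is not purely formal is that these two demands on $v$ are jointly satisfiable: adding the extra upper bound $v(x)\le a$ to $Z$ does not decrease the maximum of $v(y)-v(x)$, because in the augmented graph a shortest path from $x$ to $y$ cannot traverse the new edge $0 \to x$ without revisiting $x$, and there is no negative cycle. With $v$ fixed I then check $v\notin\aluP(Z')$ directly: for any $v'\in Z'$, if $v\luP v'$ then $v(x)\le U'_x$ forces $v'(x)\le v(x)$; the edge $Z'_{xy}$ gives $v'(y)\le v'(x)+c'\le a+c'\le L'_y$, while $v(y)>v(x)+c'\ge v'(y)$; hence $v'(y)<v(y)$ and $v'(y)\le L'_y$, contradicting $v\luP v'$. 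Tracking strict versus non-strict bounds is needed here but routine.

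For the ``only if'' direction I start from a witness $v\in Z$ with $v\notin\aluP(Z')$, i.e.\ $Z'$ is disjoint from the box $\prod_z A_z$ above. By the standard DBM feasibility criterion, disjointness of the nonempty zone $Z'$ and this box means the distance graph of $Z'$ augmented with the box constraints contains a negative cycle. Since every box edge is incident to $0$, a minimal such cycle visits $0$ exactly once, so it has the form $0\to x\to\cdots\to y\to 0$ in which the middle segment lies in $Z'$ (hence has weight at least $Z'_{xy}$) and the first and last edges are box edges: an upper-bound edge $0\to x$ of weight $(\le,v(x))$ with $v(x)\le U'_x$, and a lower-bound edge $y\to 0$ of weight $(\le,-v(y))$ or $(<,-L'_y)$ according to whether $v(y)\le L'_y$ or not. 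Because the full cycle is negative and its weight is at least $(\le,v(x))+Z'_{xy}+(\text{lower-bound edge})$, that last sum is itself negative; combining this with $v\in Z$ (which lower-bounds $v(x)$ by the negated constant of $Z_{x0}$ and upper-bounds $v(y)-v(x)$ by the constant of $Z_{xy}$) yields exactly the three inequalities for the pair $x,y$, including the degenerate cases where the cycle collapses, which are covered by allowing $y$ (or $x$) to be the reference clock $x_0$ under the convention $L'_{x_0}=U'_{x_0}=0$.

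I expect the ``only if'' direction to be the main obstacle: turning the bare existence of a negative cycle into the precise three-inequality conclusion requires normalizing the cycle to the ``one box edge at each end'' shape, a case split on whether $v(y)\le L'_y$ (which determines the form of the lower-bound box edge), separate treatment of the collapsed cycles ($x=y$, or an endpoint equal to $x_0$), and careful bookkeeping of strict versus non-strict weights so that each of $Z_{x0}\ge(\le,-U'_x)$, $Z'_{xy}<Z_{xy}$ and $Z'_{xy}+(<,-L'_y)<Z_{x0}$ comes out with the correct strictness.
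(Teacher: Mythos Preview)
The paper does not prove this theorem at all: it is explicitly introduced with ``We recall a theorem from~\cite{lics}'' and is used only as a black box to justify the inclusion tests in~\eqref{eq:pre-lower} and~\eqref{eq:pre-upper-bound}. So there is no in-paper argument to compare your proposal against.

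For what it is worth, your plan is the standard one and is essentially how the result is established in the cited reference: represent $\{v':v\luP v'\}$ as an axis-aligned box whose DBM edges are all incident to $x_0$, reduce $v\notin\aluP(Z')$ to emptiness of $Z'$ intersected with that box, and read the witnessing pair $(x,y)$ off the two box edges appearing on a minimal negative cycle of the merged distance graph. The points you flag as delicate --- the strict/non-strict bookkeeping, the case split on whether $v(y)\le L'_y$, and the degenerate cycles where $x$ or $y$ is $x_0$ --- are exactly the places where care is needed, and they all go through. One small sharpening for the ``if'' direction: rather than first fixing $a=v(x)$ and then arguing separately that $v(y)-v(x)$ can be made large, it is cleaner to add both constraints ($x\le a$ and the lower bound on $y-x$) to the distance graph of $Z$ at once and check directly that no negative cycle is created; this avoids the informal ``shortest path cannot revisit $x$'' step.
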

We are ready to proceed with our analysis. 
We distinguish two cases depending on whether the guard $g$ is of the
form $w \ggeq d$ or $w \lleq d$.

\paragraph{Lower bound guard:} When we have a lower bound guard, the
diagonals do not change during intersection and time-elapse.
Hence we have $Z'_{xy} = Z_{xy}$ when both $x$ and $y$ are non-zero
variables. This shows that (\ref{eq:alu-inc}) cannot be true when both
$x$ and $y$ are non-zero as the second condition is false. Yet again,
when $x$ is $0$, the second condition cannot be true as both $Z_{0y} =
Z'_{0y} = (<,\infty)$. It remains us to consider the single case when
$y$ is $0$. It boils down to checking if there exists a clock $x$ such
that:
\begin{align}\label{eq:incl-lower-bound-guard}
  Z_{x0} \ge (\le,-U'_x) \text{ and } Z'_{x0} < Z_{x0}
\end{align}
In words the above test asks if there exists a clock $x$ whose $x
\xra{\lleq -c} 0$ edge in $Z$ has reduced in $Z'$ and additionally the
edge weight $(\lleq, -c)$ in $Z$ satisfies either $c < U'_x$ or
$(\lleq,c) = (\le, U'_x)$. If such a clock exists, the definition of
$\pre$ in (\ref{eq:pre}) suggests that we need to check if $\sem{g}
\incl \aluP(Z')$.

Let us look at the distance graph of $\sem{g}$. It has an edge $w
\xra{ \lleq -d} 0$ and edges $x \xra{ \le 0} 0$ for all other clocks
$x$. All other edges are $\infty$. We now apply the inclusion test
(\ref{eq:alu-inc}) between this distance graph and $Z'$. Note that is
(\ref{eq:incl-lower-bound-guard}) is true, then there is a clock that
has $Z'_{x0} < Z_{x0}$. But as $Z_{x0} \le (\le,0)$, we will have
$Z'_{x0} < (\le, 0)$ which implies that $Z'_{x0} < \sem{g}_{x0}$. This
shows that if the inclusion between zones does not hold, then the
inclusion of the guard $g$ in $Z'$ also does not hold. Therefore
testing (\ref{eq:incl-lower-bound-guard}) is sufficient. This gives us
the following formula with the additional observation that
$Z'_{x0}$ can be only lesser than or equal to $Z_{x0}$.

\begin{align}\label{eq:pre-lower}
  L_{new}U_{new} =
  \begin{cases}
    \max(LU,L'U',L_gU_g) & \text{if $L(w)< d$ and } \\
    &  \exists x. ~\big(Z_{x0} \ge (\le,-U_x')\big) \land \big((Z'_{x0}< Z_{x0})\big)\\
    \max(LU, L'U') & \text{otherwise}
  \end{cases}
\end{align}

Also note that this can be easily extended to an incremental
procedure: whenever we add an extra clock to $U'$, then we need to
check only this clock. The above definition also suggests that
whenever only $L'$ is modified we don't have to check anything and
just propagate the new values of $L'$.

\paragraph{Upper bound guard:}
 When we have an upper bound guard, the
diagonals might change. However no edge $0 \xra{} x$ or $x \xra{} 0$
changes. Therefore we need to check (\ref{eq:alu-inc}) for two
non-zero variables $x$ and $y$.

In other words, among clocks $x$ that have a finite $U'$ constant and clocks
$y$ that have a finite $L'$ constant, we check if there is a diagonal $x \xra{} y$
that has strictly reduced in $Z'$ and additionally satisfies $Z'_{xy}
+ (<,L_y) < Z_{x0}$. Note that this also entails $\sem{g} \not \incl
\aluP(Z')$. This is because when $g$ is $w \lleq d$, we have
$\sem{g}_{xy} = (<, \infty)$ and $\sem{g}_{x0} = (\le,0)$ and hence
(\ref{eq:alu-inc}) becomes true when $Z$ is substituted with
$\sem{g}$. Therefore it is sufficient to check (\ref{eq:alu-inc}) for
non zero variables $x$ and $y$. This gives the following formula
function:
\begin{align}\label{eq:pre-upper-bound}
  L_{new}U_{new} =
  \begin{cases}
    \max(LU,L'U',L_gU_g) & \text{if $U(w)< d$ and }  \exists x,y.\ 
    \text{such that}\\
    &  Z_{x0}\ge (\leq,-U'_x) \text{ and }  \big( Z'_{xy} < Z_{xy} )  \text{ and }  \\
    & \quad  \big( Z'_{xy} + (<, -L_y') < Z_{x0} \big) \\
    \max(LU, L'U') & \text{otherwise}
  \end{cases}
\end{align}

This test can also be done incrementally. Each time we propagate, we
need to perform extra checks only when a new clock has got a finite
value for either $L'$ or $U'$. 

\paragraph{Upper bound and reset.}\label{up-reset}
Here we consider the case when we have guard and reset at the same
time. So we consider transition $Z\tto_{(w<d),R}Z'$. We will combine
the cases above since we will treat this transition as
\begin{equation*}
  Z\tto_{b:=0} Z^1\tto_{w<d} Z^2\tto_{R} Z^3\tto_{b<0} Z^4
\end{equation*}
Suppose we have $L'U'=L^4U^4$ that we want to propagate it back to
$Z$. Since $b$ is a clock introduced for technical reasons we can
assume that $L^4(b)=U^4(b)=-\infty$. We need to calculate the values
of changed edges in all the zones
\begin{itemize}
\item In $Z^1$ we get $Z^1_{b0}=0$, and $Z^1_{xb}=Z_{x0}$, and $Z^1_{bx}=\infty$.
\item In $Z^2$ we get $Z^2_{xy}=Z^1_{x0}+d+Z_{wy}$ (if this edge
  changes).
\item In $Z^3$ every edge stays the same but for the clocks that are
  reset. We have $Z^3_{v0}=0$, $Z^3_{xv}=Z_{x0}$,
  and $Z^3_{vx}=\infty$ for $v\in R$ and $x\not\in R$.

\item In $Z^4$ we get $Z^4_{xy}=Z_{x0}+Z^2_{by}$ if this edge
  changes. 
  \begin{itemize}
  \item Suppose $x\not\in R$. From the second item we know that
    $Z^2_{by}=(d+Z_{wy})$.  So
    $Z^4_{xy}=Z_{x0}+d+Z_{wy}=Z^2_{xy}$. This means that no edge
    changes from $Z^3$ to $Z^4$.
  \item Suppose $x\in R$ then
    $Z^4_{xy}=Z^3_{by}=Z^2_{by}=d+Z_{wy}$. Since $Z^3_{xy}=\infty$
    this edge necessarily changes.
  \end{itemize}
\end{itemize}
Because of the last item we see that we always take the guard $x<b$
into $U$. So $L^3U^3=L^4U^4[U_b=0]$. Now $L^2U^2=L^3U^3[R=-\infty]$. In order
to get $L^1U^1$ we apply the formula~\eqref{eq:pre-upper-bound} using
the knowledge what is the relation between $L'U'$ and $L^2U^2$:
\begin{align}\label{eq:pre-upper-reset-bound}
  L_{new}U_{new} =
  \begin{cases}
    \max(LU,L'U',L_gU_g) & \text{if $U(w)< d$ and }  \exists
    x,y\not\in R.\    \text{such that}\\
    &  Z^1_{x0}\ge (\leq,-U^3_x) \text{ and }  \big( Z^2_{xy} < Z^1_{xy} )
    \text{ and }  \\
    & \quad  \big( Z^2_{xy} + (<, -L'_y) < Z^1_{x0} \big) \\
   \max(LU,L'U',L_gU_g) &  \exists y\not\in R.\    \text{such that}\\
          & \quad  \big(d+Z_{wy} + (<, -L_y') < 0 \big) \\
    \max(LU, L'U') & \text{otherwise}
  \end{cases}
\end{align}
The second formula is the specialization of the first for the case of
$x=b$. So we see that we almost always take the $w<d$ guard.
Observe that the first condition implies the second since
$Z^2_{xy}=Z^1_{0x}+d+Z_{wy}$. So if $Z^2_{xy} + (<, -L'_y) < Z^1_{x0}$
then $Z^1_{0x}+d+Z_{wy}+(<-L'_y) < Z^1_{x0}$ which is equivalent to 
$d+Z_{wy}+(<-L'_y) < Z^1_{x0}-Z^1_{0x}$. But $Z^1_{x0}-Z^1_{0x}\leq 0$
since the zone is not empty.

% \textbf{Upper bound and reset (the direct approach)}
% Here we consider the case when we have guard and reset at the same
% time. So we consider transition $Z\tto_{(w<d),R}Z'$. We want to show
% the correctness of the formula 
% \begin{align}\label{eq:pre-upper-reset-bound}
%   L_{new}U_{new} =
%   \begin{cases}
%     \max(LU,L'U',L_gU_g) & \text{if $U(w)< d$ and }  \exists
%     x,y\not\in R.\    \text{such that}\\
%     &  Z_{x0}\ge (\leq,-U'_x) \text{ and }  \big( Z'_{xy} < Z_{xy} )
%     \text{ and }  \\
%     & \quad  \big( Z'_{xy} + (<, -L'_y) < Z_{x0} \big) \\
%     \max(LU, L'U') & \text{otherwise}
%   \end{cases}
% \end{align}
% So this is as the one above but not permiting $b$ in the range of
% $x$ (since this time we do not use an additional clock). 

% Take $v\in \Post_g(\aluNew(Z))$ (notice that there is no reset
% here). We have by the case of upper bound that $v\in
% \aluP(\Post_g(Z))$. By definition there is $v'\in Z$ such that $v\luP
% v'$. We have that $v'[R:=0]\in \Post_{g;R}(Z)=(Z\cap\sem{g})[R:=0]

\subsection{Implementation of the $newbounds$ function}

We consider a transition of the form
\begin{equation*}
  (q,Z,LU)\xra{g} (q',Z',L'U')
\end{equation*}
We suppose that $\verb|newbounds|$ function examines this
transition. The bounds $L'U'$ have been updated and now we 
determine how to update the bounds $LU$.
Let $X'_L$ be the set of clocks for which $L'$ bound has been
updated. Similarly $X'_U$ for $U'$ bounds.

We will define the new bounds for $(q,Z)$. So the node $(q,Z,LU)$ will
be changed to $(q,Z,\Lnew\Unew)$. Observe that the bounds can only
increase.

We have four cases depending on the type of the guard. The pseudocode
is presented in Algorithm~\ref{fig:bounds}

{
  \setlength{\columnsep}{35pt}
  \setlength{\columnseprule}{0.5pt}
\begin{lstlisting}[mathescape=true, float=tp, 
caption={\texttt{disabled} and \texttt{newbounds} functions},
label=fig:bounds]

function disabled($q$,$Z$)
  $L$:=$L_{-\infty}$; $U$:=$U_{-\infty}$;
  for every transition $t$ from $q$ disabled from $(q,Z)$ do
     choose an atomic guard $x\lleq d$ from the guard of $t$ 
     such that $Z\not\sat x\lleq d$ // guard of $t$ has only upper bound guards
     $U(x)$:=$\max(d,U(x))$
  return (L,U)



function newbounds($v$,$v$,$X'_L$,$X'_U$)
for every clock $x$ do
  if $x\in X'_L$ then $L_{new}(x)$:=$\max(L(x),L'(x))$ else $L_{new}(x)$:=$L(x)$;
  if $x\in X'_U$ then $U_{new}(x)$:=$\max(U(x),U'(x))$ else $U_{new}(x)$:=$U(x)$;

if transtion $v\to v'$ is a lower bound guard $\Land_{i=1\dots k} v_i\geq d_i$
  $E$:=$\set{(x,0) : x\in X'_U\ \text{ and }\ Z_{x0}\geq (\leq,-U'_x)\ \text{ and }\ Z'_{x0} < Z_{x0}}$
  while $E\not=\es$ do
    choose $d_i$ such that there is $(x,0)\in E$ with $-d_i+Z_{xv_i}=Z'_{x0}$;
    $L_{new}(v_i)$:=$\max(d_i,L_{new}(v_i));$
    $E$:=$E\setminus \set{(x,0) : d_i+Z_{xv_i}=Z'_{x0}}$

else if transtion $v\to v'$ is an upper bound guard $\Land_{i=1\dots k} w_i\leq e_i$
  $E$:=$\{(x,y) : x\in X'_U\ \text{ and }\ y\in X'_L,\ \text{ and }\  $
             $Z_{x0}\geq(\leq,-U'_x)\ \text{ and } Z'_{xy}< Z_{xy}\ \text{ and }\ Z'_{xy}+(<,-L_y)< Z_{x0}\}$;
  while $E\not =\es$ do
    choose $e_i$ such that there is $(x,y)\in E$ with  $e_i+Z_{w_iy}+Z_{x0}=Z'_{xy}$
    $U_{new}(w_i)$:=$\max(e_i,U_{new}(w_i))$;
    $E$:=$E\setminus\set{(x,y) : e_i+Z_{w_iy}+Z_{x0}=Z'_{xy}}$

else if transtion $v\to v'$ is a reset $R$
  for $x\in R$ do 
   $\Lnew(x)$=$L(x)$; $\Unew(x)$:=$U(x)$;

else if transition $v\to v'$ is an upper bound guard $\Land_{i=1\dots k} w_i\leq e_i$ and 
                            a reset $R$
  Fix some $r\in R$;
  $E$:=$\set{(r,y) : y\in X'_L\setminus R \text{ and } Z'_{ry}<(<,L'_y)}$;
  while $E\not=\es$ do
     choose $e_i$ such that there is $(r,y)\in E$ with $e_i+Z_{wy}=Z'_{ry}$
     $U_{new}(w_i)$:=$\max(e_i,U_{new}(w_i))$;
     $E$:=$E\setminus\set{(r,y): e_i+Z_{wy}=Z'_{ry}}$;

return($\Lnew$,$\Unew$);
\end{lstlisting}
}

\textbf{Lower bound guard} We consider a transition for the form
$(q,Z,LU)\xra{g_L} (q',Z',L'U')$ with $g_l\equiv \land_{i=1\dots k}
v_i\geq d_i$.
First, we set $L_{new}U_{new}$ to the maximum of $LU$ and $L'U'$;
notice that by the defintion of $X'_L$ and $X'_U$
we need to calculate maximum only for the clocks in these two sets. 
Then we establish the set of edges $E$ of the zone $Z'$ that have changed, and
that are relevant for the test~\eqref{eq:pre-lower}. 
The final loop decides which constraints should be taken to
increase $L$ bound. 
We take $d_i$ when it indeed determines some relevant edge from $E$. 
If we take $d_i$ then we update $L_{new}$, and remove from $E$ 
all edges that are set by $d_i$. 
This is because there may be another constraint that influences the same change
in $Z'$ and there is no point of taking it.

For the correctness proof let $g^1_L$ be the set of constraints that
have been taken and $g^2_L$ the constraints that have been omited. 
The transition $(q,Z,L_{new}U_{new})\tto_{g_L} (q',Z',L'U')$ can be
decomposed into $(q,Z,L_{new}U_{new})\tto_{g^1_L}
(q',Z^1,L'U')\tto_{g^2_L}(q',Z',L'U')$.
From the algorithm we know that all the edges from $E$ as in line $18$
are the same in $Z^1$ and $Z'$. 
Hence by formula~\eqref{eq:pre-lower} we get 
$\Post_{g^2_L}(\aluP(Z^1))\incl \aluP(Z')$. 
Since all the guards from $g^1_L$ are taken we get
$\Post_{g^1_L}(\aluNew(Z))\incl \aluP(Z^1)$.

\textbf{Upper bound guard} 
We consider a transition of the form 
$(q,Z,LU)\xra{g_U} (q',Z',L'U')$ with 
$g_l\equiv \land_{i=1\dots k} w_i\leq e_i$.
Let us explain Algorithm~\ref{fig:bounds} in this case.
As in the previous case we set $L_{new}U_{new}$ to the maximum of 
$LU$ and $L'U'$.
Next we calculate the set of edges $E$ that can influence taking a guard.
The final for loop considers a constraint one by one. 
When the constraint implies an edge in $E$ we take the constraing and
remove all the edges implied by it. 

The correctness proof is very similar to the previous case. Let
$g^1_U$ be the set of constarints that have been taken and $g^2_U$ the
constraints that have been omited. The transition
$(q,Z,L_{new}U_{new})\tto_{g_U} (q',Z',L'U')$ can be decomposed into
$(q,Z,L_{new}U_{new})\tto_{g^1_U}
(q',Z^1,L'U')\tto_{g^2_U}(q',Z',L'U')$.  From the algorithm we know
that all the edges from $E$ as in line $25$ are the same in $Z^1$ and
$Z'$.  Hence by formula~\eqref{eq:pre-upper-bound} we get
$\Post_{g^2_L}(\aluP(Z^1))\incl \aluP(Z')$.  Since all the guards from
$g^1_L$ are taken we get $\Post_{g^1_L}(\aluNew(Z))\incl \aluP(Z^1)$.

\textbf{Reset} The case of reset follows directly from the formula in
Section~\ref{subsec:reset}. 

\textbf{Upped bound and reset} This case follows directly from the
formula~\eqref{eq:pre-upper-reset-bound}.

\newcommand{\zonegraph}{zone graph \, } 

\section{Examples}
\label{sec:example}
In this section we will analyze behavior of our algorithm on some
examples in order to explain some of the sources of the gains reported
in the next section.

% What we are aiming at? When all edges are enabled in the zone graph,
% then we don't have to take any constants at all. (If we know all the
% edges are enabled, we could abstract all zones to the true zone
% safely). When something is
% disabled, it is enough to take the guards that cause this edge to be
% disabled.

\subsection{All edges enabled}
\label{sec:all-edges-enabled}

Consider the automaton $\Aa_1$ shown in
Figure~\ref{fig:allenabled}. In the same figure, the
\zonegraph of $\Aa_1$ has been depicted. Note that the zone graph has
no edges disabled and hence is isomorphic to
the automaton. In such a case, observe that it is safe to abstract all
the zones by the true zone. The set of reachable states of the
automaton remain the same even after abstracting all zones to the true
zones. 

\begin{figure}
\centering
\begin{tikzpicture}[shorten >=1pt,node distance=2.5,on
      grid,auto]
     
      \begin{scope}
      \node[state,initial,initial text=] (q0) {$q_0$};
      \node[state] (q1) [right=of q0] {$q_1$};
      \node[state] (q2) [right=of q1] {$q_2$};
      \node[state] (q3) [right=of q2] {$q_3$};
      \end{scope}

      \begin{scope}[->]
        \draw (q0) edge node {\footnotesize $x \ge 5$} (q1);
        \draw (q1) edge node {\footnotesize $y \ge 5$} (q2);
        \draw (q2) edge node {\footnotesize $w \le 10$} (q3);
      \end{scope}

\end{tikzpicture}

\vspace{0.4cm}

\begin{tikzpicture}
\begin{scope}
\tikzstyle{every node}=[draw,fill=white,rounded corners=2mm]
\node (z0) at (-6,0) {{\scriptsize $q_0: (x = y = w \ge 0)$ }};
\node (z1) at (-2.5,0) {{\scriptsize $q_1: (x = y = w \ge 5)$ }};
\node (z2) at (1,0) {{\scriptsize $q_2: (x = y = w \ge 5)$ }};
\node (z3) at (4.5,0) {{\scriptsize $q_3: (x = y = w \ge 5)$ }};
\end{scope}

\begin{scope}[double distance = 1.5pt, very thin]
\draw[->] (-7.7,0) -- (z0);
\draw[->, double] (z0) --  node[above] {\tiny $x \ge 5$} (z1);
 \draw[->, double] (z1) --  node[above] {\tiny $y \ge 5$}
 (z2);
 \draw[->, double] (z2) --  node[above] {\tiny $w \le 10$}
 (z3);
\end{scope}
\end{tikzpicture}
\caption{$\Aa_1$: all edges enabled in the zone graph}
\label{fig:allenabled}

\end{figure}

%%% Local Variables: 
%%% mode: latex
%%% TeX-master: "../m"
%%% End: 

Algorithm~\ref{fig:algorithm} is able to incorporate this
phenomenon. Initially all the constants are $-\infty$ and hence the
$\alu$ abstraction of each zone would give the true zone. The
algorithm starts propagating finite $LU$-constants only when it
encounters a disabled edge during exploration. In particular, if there
are no edges disabled, all the constants are kept $-\infty$. We will
now see an example where this property of the propagation yields
exponential gain over the static analysis method and the on-the-fly
constant propagation procedure.

% Neibert's example with both constants 1. Compare it with static
% analysis. In this case, alu,otf should not give much more gain than
% static analysis

\begin{figure}[tbh]
  \centering
  \includegraphics[scale=.7]{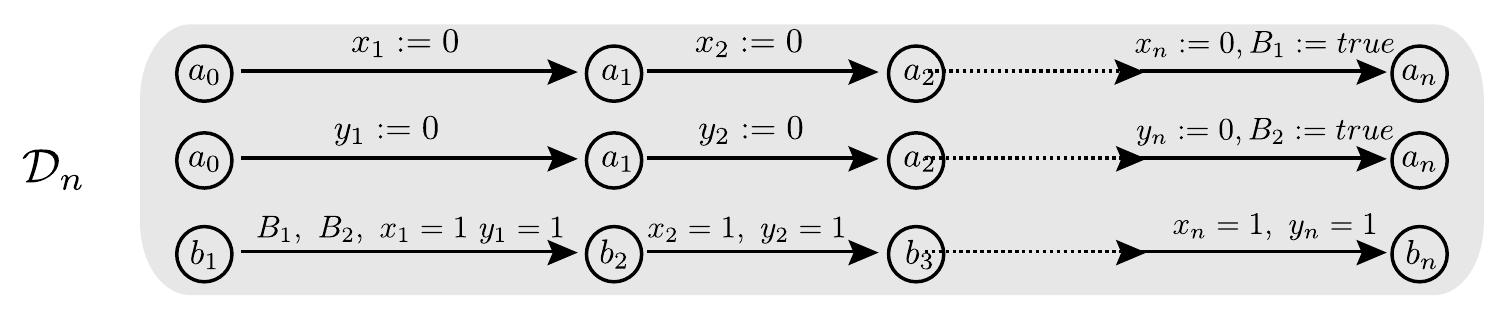}
  \caption{Automaton $\Dd_n$}
  \label{fig:niebert-enabled}
\end{figure}
Consider the automaton $\Dd_n$ shown in
Figure~\ref{fig:niebert-enabled}. This is slightly modified from the
example given in~\cite{LugiezNiebertZ05}. We have changed all guards to check
for an equality. It is a parallel composition of
three components. Automaton $\Dd_n$ has $2n$ clocks: $x_1, \dots, x_n$
and $y_1, \dots, y_n$. The first two components respectively reset the $x$-clocks
and $y$-clocks. The third component can be fired only after the first
two have reached their $a_n$ states. The states of the product
automaton $\Dd_n$ are of the form $(a_i, a_j, b_0)$ and $(a_n, a_n,
b_k)$ where $i, j , k \in \{ 0, \dots, n\}$. In all, there are $(n+1)^2
+ n$ states in the product automaton. Let us assume that no state is
accepting so that any algorithm that explores this automaton should
explore the entire zone graph.

Clearly, all the transitions can be fired if no time elapses in the
states $(a_i, a_j, b_0)$  for $i, j \in {1, \dots, n-1}$ and exactly
one time unit elapses in $(a_n, a_n, b_0)$. Therefore, the zone graph
of $\Dd_n$ should have no edges disabled which implies that
the $LU$-constants given by Algorithm~\ref{fig:algorithm} in each node
are $-\infty$. The number of uncovered nodes in the ASG obtained would
be the same as the number of states. 

\paragraph{Static analysis:}
However, the static analysis procedure would give $L = U = 1$ for
every clock. We will now see that this would yield a zone graph with
at least $2^n$ nodes. 

\begin{figure}
\centering

\begin{tikzpicture}

\begin{scope}[scale=0.8]
\draw[->,thick, >=stealth] (0,0) -- (4,0);
\draw[->,thick, >=stealth] (0,0) -- (0,4);

\node at (-0.3, -0.3) {\footnotesize $0$};

\node at (4.1, -0.3) {\footnotesize $x$};
\node at (-0.3, 4.1) {\footnotesize $y$};

\draw[very thick] (2,0) -- (2,4);
\draw[very thick] (0,2) -- (4,2);

\node at (2, -0.3) {\scriptsize $L_x = U_x = 1$};
\node [left] at (0, 2) {\scriptsize $L_y = U_y = 1$};

\fill[gray, nearly transparent] (0,0) -- (4,4) -- (0,4) -- cycle;
\draw[thin] (0,0) -- (4,4);

\fill[pattern=dots] (2,2) -- (4,2) -- (4,4) -- cycle;

\begin{scope}[xshift=1cm, yshift=-1.2cm]
\draw[fill=gray,nearly transparent] (0,0) rectangle (0.3,0.3);
\node [right] at (0.3, 0.15) {\footnotesize $Z_1$};

\draw[fill=gray, nearly transparent] (0,-0.5) rectangle (0.3,-0.2);
\draw[fill, pattern=dots] (-0.7, -0.5) rectangle (-0.4, -0.2);
\node at (-0.2, -0.35) {\footnotesize $\cup$};
\node [right] at (0.3, -0.35) {\footnotesize $\alu(Z_1)$};
\end{scope}

\end{scope}

\begin{scope}[xshift = 6cm,scale=0.8]
\draw[->,thick, >=stealth] (0,0) -- (4,0);
\draw[->,thick, >=stealth] (0,0) -- (0,4);

\node at (-0.3, -0.3) {\footnotesize $0$};

\node at (4.1, -0.3) {\footnotesize $x$};
\node at (-0.3, 4.1) {\footnotesize $y$};

\draw[very thick] (2,0) -- (2,4);
\draw[very thick] (0,2) -- (4,2);

\node at (2, -0.3) {\scriptsize $L_x = U_x = 1$};
\node [left] at (0, 2) {\scriptsize $L_y = U_y = 1$};

\fill[gray, nearly transparent] (0,0) -- (4,4) -- (4,0) -- cycle;
\draw[thin] (0,0) -- (4,4);

\fill[pattern=dots] (2,2) -- (2,4) -- (4,4) -- cycle;

\begin{scope}[xshift=1cm, yshift=-1.2cm]
\draw[fill=gray,nearly transparent] (0,0) rectangle (0.3,0.3);
\node [right] at (0.3, 0.15) {\footnotesize $Z_2$};

\draw[fill=gray, nearly transparent] (0,-0.5) rectangle (0.3,-0.2);
\draw[fill, pattern=dots] (-0.7, -0.5) rectangle (-0.4, -0.2);
\node at (-0.2, -0.35) {\footnotesize $\cup$};
\node [right] at (0.3, -0.35) {\footnotesize $\alu(Z_2)$};
\end{scope}

\end{scope}

\end{tikzpicture}
\caption{Zones indistinguishable by $\alu$}
\label{fig:alu-example-1}
\end{figure}
%%% Local Variables: 
%%% mode: latex
%%% TeX-master: "../CAV/m"
%%% End: 

Consider Figure~\ref{fig:alu-example-1} that shows two zones $Z_1$ and
$Z_2$ and their $\alu$ abstractions when $L = U = 1$ for both the
clocks $x$ and $y$. Zone $Z_1$ is given by all valuations that satisfy
$x \le y$. Similarly zone $Z_2$ is given by all valuations that
satisfy $x \ge y$. Observe that $Z_1$ and $Z_2$ are incomparable with
respect to $\alu$, that is, $Z_1 \not \incl \alu(Z_2)$ and $Z_2 \not
\incl \alu(Z_1)$. 

In our example of the automaton $\Dd_n$, if in a path, $x_1$ is reset before
$y_1$ then in the state $(a_n, a_n, b_0)$ we would have a zone that
entails $y_1 \le x_1$. Similarly if $y_1$ is reset before $x_1$, then
the zone would entail $x_1 \le y_1$. In each of these paths to $(a_n,
a_n, b_0)$ clock $x_2$ could be reset either before or after $y_2$ and
so on for each $x_i$. There are at least $2^n$ paths leading to $(a_n,
a_n, b_0)$ each of them giving a different zone depending on the order
of resets. Note that two zones are incomparable if a projection onto 2
clocks are incomparable. By the argument in the previous paragraph,
each of the mentioned zones would be incomparable with respect to the
other. Therefore there are at least $2^n$ uncovered nodes with state
$(a_n, a_n, b_0)$. 

\paragraph{$\alu$,otf:}

As all the edges are enabled, the constant propagation algorithm would
explore a path up to $(a_n, a_n, b_n)$. This would therefore give $L =
U = 1$ for each clock, similar to static analysis. So in this case too
there would be at least $2^n$ uncovered nodes in the reachability tree
obtained.

\subsection{Presence of disabled edges}

Consider the automaton $\Aa_2$ in Figure~\ref{fig:onedisabled}. One
can see that the last transition with the upper bound is not
fireable. The cause of the edge being disabled is because the value of
$w$ in all the valuations of $Z_3$ is bigger than $1$. The cause of
this increase is the first lower bound guard $x \ge 5$. At $q_1$
itself, all the valuations have $w \ge 1$. As $w$ is never reset in
the automaton, there is no way $w$ can get lesser than $1$ after
passing this guard. Note that the guards $y \ge 5$ and $z \ge 100$ do
not play a role at all in the edge being disabled. Even if they had
not been there, the edge would be disabled.

\begin{figure}
\centering
\begin{tikzpicture}[shorten >=1pt,node distance=2.5,on
      grid,auto]
     
      \begin{scope}
      \node[state,initial,initial text=] (q0) {$q_0$};
      \node[state] (q1) [right=of q0] {$q_1$};
      \node[state] (q2) [right=of q1] {$q_2$};
      \node[state] (q3) [right=of q2] {$q_3$};
      \node[state] (q4) [right=of q3] {$q_4$};
      \end{scope}

      \begin{scope}[->]
        \draw (q0) edge node {\footnotesize $x \ge 5$} (q1);
        \draw (q1) edge node {\footnotesize $y \ge 5$} (q2);
        \draw (q2) edge node {\footnotesize $z \ge 100$} (q3);
        \draw (q3) edge node {\footnotesize $w  \le 2$} (q4);
      \end{scope}

\end{tikzpicture}

\vspace{0.4cm}

\begin{tikzpicture}
\begin{scope}
\tikzstyle{every node}=[draw,fill=white,rounded corners=2mm]
\node (z0) at (-6,0) {{\scriptsize $q_0: (x = y = w \ge 0)$ }};
\node (z1) at (-2.5,0) {{\scriptsize $q_1: (x = y = w \ge 5)$ }};
\node (z2) at (1,0) {{\scriptsize $q_2: (x = y = w \ge 5)$ }};
\node (z3) at (4.5,0) {{\scriptsize $q_3: (x = y = w \ge 100)$ }};
\end{scope}

\begin{scope}[double distance = 1.5pt, very thin]
\draw[->] (-7.7,0) -- (z0);
\draw[->, double] (z0) --  node[above] {\tiny $x \ge 5$} (z1);
 \draw[->, double] (z1) --  node[above] {\tiny $y \ge 5$}
 (z2);
 \draw[->, double] (z2) --  node[above] {\tiny $y \le 100$}
 (z3);
\draw[->, double] (z3) -- (6.5,0);
\end{scope}
\node at (6.2, 0) {\large $\times$};
\end{tikzpicture}
\caption{$\Aa_2$: One edge disabled}
\label{fig:onedisabled}

\end{figure}

%%% Local Variables: 
%%% mode: latex
%%% TeX-master: "../m"
%%% End: 

 We want to capture this
scenario by saying that at $q_0$ the relevant constants are: $L_0(x) =
5$ and $U_0(x) = 1$ and the rest are $-\infty$. One can verify that Algorithm
~\ref{fig:algorithm} would give exactly these constants. The static analysis
algorithm or the constant propagation would give additionally $L(y) =
5 $ and $L(z) = 100$, which we have seen are unnecessary. This way, we
get smaller constants and hence bigger abstract zones. 

We will now see that this pruning can sometimes lead to an exponential
gain. We will modify the example $\Dd_n$ of
Section~\ref{sec:all-edges-enabled}. 

\begin{figure}[tbh]
  \centering
  \includegraphics[scale=.7]{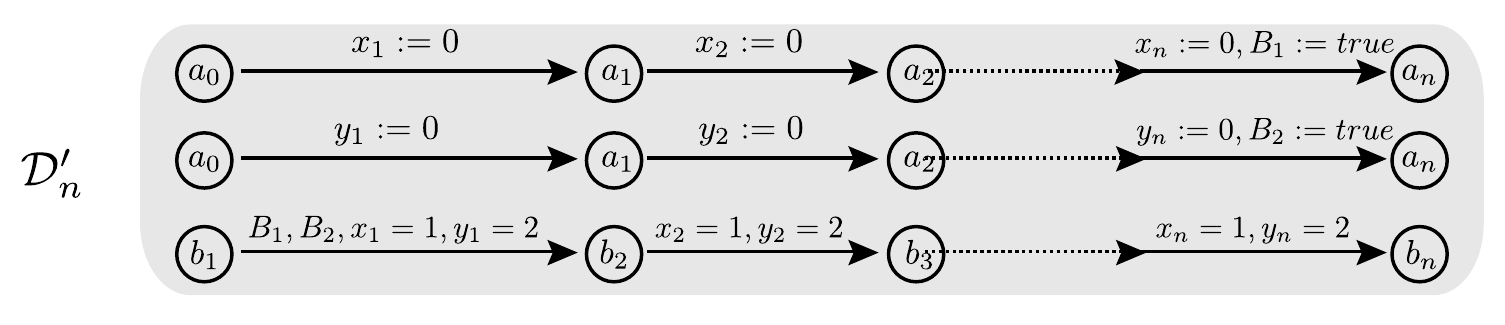}
  \caption{Automaton $\Dd'_n$}
  \label{fig:niebert-disabled}
\end{figure}

Let $\Dd'_n$ be the automaton shown in
Figure~\ref{fig:niebert-disabled}. It is the same as $\Dd_n$ except
that now every guard involving $y$-clock is $y == 2$. Starting
from a node $((a_n,a_n, b_0), Z, LU)$, it is possible to reach a node
with state $(a_n, a_n, b_n)$ only if $Z$ entails $x_i \le y_i$ for all
$i$. If ``fortunately'', the order of exploration of the resets leads us to such a
zone $Z$, then this path would yield no constants and hence the
abstraction would give the true zone. Due to this there would not be
any more exploration from $(a_n, a_n, b_0)$ and we would have the
number of uncovered nodes equal to number of states of automaton.

If it is not the case, then there is an $i$ such that $y_i \le x_i$
and for all $j < i$, $x_i \le y_i$. Therefore, the path can be
taken till $b_{i-1}$ after which the transition gets disabled because
we check for $y_i \ge 2$ and $x_i \le 1$. The disabled edge gives the
constant $U(x_i) = 1$ and the propagation algorithm additionally
generates $L(y_i) = 2$ and propagates these two backwards. These are
the relevant guards that cause the disabled edge. Since these are the
only constants, in the future, exploration will not occur from a node
$((a_n, a_n, b_0), Z', L'U')$ if $Z'$ satisfies $x_i \le y_i$ as they
will be covered. There will be at most $n$ uncovered nodes with the
state $(a_n, a_n, b_0)$ and hence the total number of uncovered nodes
will be in size quadratic in $n$.

\paragraph{Static analysis:}
The static analysis procedure would give $L = U = 2$ for all
$y$-clocks and $L = U = 1$ for all $x$-clocks. A similar argument as
in Section~\ref{sec:all-edges-enabled} would show at least $2^n$
uncovered nodes with state $(a_n, a_n, b_0)$. 

\paragraph{$\alu$,otf:}
The otf bounds algorithm could work slightly different from the
previous case. The constants generated depend on the first path. If
the first path leads up to $(a_n, a_n, b_n)$ then there are constants
generated for all clocks. Then, the zone cannot cover any of the
future zones that appear at $(a_n, a_n,b_n)$. A depth-first search
algorithm would clearly then be exponential.  Otherwise, if the path gets cut at $b_{k-1}$ 
constants are generated for all clocks $x_1, y_1, \dots, x_k, y_k$. In
this case, at least $2^k$ nodes at $(a_n,a_n,b_0)$ need to be
distinguished. 

% Neibert's example with 1 and 2. Here there would be some gain shown
% by alu,otf (depending on order of exploration, this could be
% exponentially bad too) But still alu, disabled would be best (always
% polynomial). 

%%% Local Variables: 
%%% mode: latex
%%% TeX-master: "m"
%%% End: 

\section{Experiments}
We report experiments in Table~\ref{table:experiments} for classical
benchmarks from the literature. The first two columns compare
UPPAAL~4.1.13 with our own implementation of UPPAAL's algorithm
(\textsf{$\ExtraLUp$,sa}). We have taken particular care to ensure
that the two implementations deal with the same model and explore it
in the same way. However, on the last example (Stari), we did not
manage to force the same search order in the two tools. 
% It shows the bias introduced by different
% choices of implementation. On the one hand both implementations visit
% the same number of nodes. On the other hand, UPPAAL is faster on some
% models (CSMA/CD and Fischer) whereas our implementation is faster on
% some other examples (LNZ and FDDI).

The last two algorithms are using bounds propagation. In the
third column (\textsf{$\alu$,otf}), we report the results for the
algorithm in~\cite{Herbreteau:FSTTCS:2011} that propagates the bounds
from every transition (enabled or disabled) that is encountered during
the exploration of the zone graph. Since this algorithm only considers
the bounds that are reachable in the zone graph, it generally visits less nodes 
than UPPAAL's algorithm. The last column
(\textsf{$\alu$,disabled}) corresponds to the algorithm introduced in
this paper. It propagates the bounds that come from the disabled
transitions only. As a result it generally outperforms the other
algorithms. The actual implementation of our algorithm is slightly more
sophisticated than presented in Algorithm~\ref{fig:algorithm}. Similarly to 
UPPAAL, it uses a Passed/Waiting list instead of a stack. 
The implemented algorithm is presented in
Appendix~\ref{sec:new-algorithm-detailed}.

\begin{table}[t]
  \scriptsize
  \centering
  \begin{tabular}{|c|r||r|r||r|r||r|r||r|r|}
    \hline
    Model
    & nb. of
    & \multicolumn{2}{|c||}{UPPAAL (\texttt{-C})}
    & \multicolumn{2}{|c||}{\textsf{$\ExtraLUp$,sa}}
    & \multicolumn{2}{|c||}{\textsf{$\alu$,otf}}
    & \multicolumn{2}{|c|}{\textsf{$\alu$,disabled}}
    \\
    & clocks
    & nodes & sec.
    & nodes & sec.
    & nodes & sec.
    & nodes & sec.
    \\
    \hline
    $\mathcal{D}_7''$
    & 14
    & 18654 & 11.6
    & 18654 & 8.1
    & 213 & 0.0
    & 72 & 0.0
    \\
    $\mathcal{D}_8''$
    & 16
    & & 
    & &
    & 274 & 0.0
    & 90 & 0.0
    \\
    $\mathcal{D}_{70}''$
    & 140
    & &
    & &
    & &
    & 5112 & 1.9
    \\
    \hline
    CSMA/CD 10
    & 11
    & 120845 & 1.9
    & 120844 & 6.3
    & 78604 & 6.1
    & 74324 & 6.1
    \\
    CSMA/CD 11
    & 12
    & 311310 & 5.4
    & 311309 & 16.8
    & 198669 & 16.1
    & 188315 & 15.9
    \\
    CSMA/CD 12
    & 13
    & 786447 & 14.8
    & 786446 & 44.0
    & 493582 & 41.8
    & 469027 & 40.9
    \\
    \hline
    FDDI 50
    & 151
    & 12605 & 52.9
    & 12606 & 29.4
    & 5448 & 14.7
    & 401 & 0.8
    \\
    FDDI 70
    & 211
    & &
    & &
    & &
    & 561 & 2.7
    \\
    FDDI 140
    & 421
    & &
    & &
    & &
    & 1121 & 37.6
    \\
    \hline
    Fischer 9
    & 9
    & 135485 & 2.4
    & 135485 & 8.9
    & 135485 & 11.4
    & 135485 & 24.7
    \\
    Fischer 10
    & 10
    & 447598 & 10.1
    & 447598 & 34.0
    & 447598 & 42.8
    & 447598 & 98.1
    \\
    Fischer 11
    & 11
    & 1464971 & 40.4
    & 1464971 & 126.8
    & &
    & &
    \\
    \hline
    Stari 2
    & 7
    & 7870 & 0.1
    & 6993 & 0.4
    & 5779 & 0.4
    & 5113 & 0.5
    \\
    Stari 3
    & 10
    & 136632 & 1.7
    & 113958 & 9.4
    & 82182 & 8.2
    & 53178 & 7.8
    \\
    Stari 4
    & 13
    & 1323193 & 26.2
    & 983593 & 109.0
    & 602762 & 84.9
    & 342801 & 65.7
    \\
    \hline
  \end{tabular}

  \caption{Comparison of reachability algorithms: number of visited
    nodes and running time. For each model and each algorithm, we kept
    the best of depth-first search and breadth-first
    search. Experiments done on a MacBook with 2.4GHz Intel Core Duo
    processor and 2GB of memory running MacOS X 10.6.8. Missing
    numbers are due to time out (150s) or memory out (1Gb).}
  \label{table:experiments}
\end{table}

The results show a huge gain on two examples: $\mathcal{D}''$ and
FDDI. $D''_n$ corresponds to the automaton $\mathcal{D}_n$ in
Fig.~\ref{fig:onedisabled} where the tests $x_k=1, y_k=1$ have been
replaced by $(0 < x_k \le 1), (1 < y_k \le 2)$. While it was easier in
Section~\ref{sec:example} to analyze the example with equality tests,
we wanted here to show that the same performance gain occurs also when
static $L$ bounds are different from static $U$ bounds.
The number of nodes visited by algorithm \textsf{$\alu$,disabled}
exactly corresponds to the number of states in the timed
automaton. The situation with the FDDI example is similar: it has only
one disabled transition. The other three algorithms take useless clock
bounds into account. As a result they quickly face a combinatorial
explosion in the number of visited nodes. We managed to analyze
$\mathcal{D}''_n$ up to $n=70$ and FDDI up to size 140 despite the
huge number of clocks

Fischer example represents the worst case scenario for our
algorithm. Dynamic bounds calculated by algorithms
\textsf{$\alu$,otf} and \textsf{$\alu$,disabled} turn out to be the same
$LU$-bounds given by static analysis.

The last two models, CSMA/CD and Stari~\cite{Bozga:CHARME:1999} show
the average situation. 
The interest of Stari is that it is a very complex example with both a
big discrete part and big continuous part. 
The model is exactly the one presented in op. cit. but for a fixed
initial state. 
Algorithm \textsf{$\alu$,disabled} discards
many clock bounds by considering disabled transitions only. This leads
to a significant gain in the number of visited nodes at a reasonable
cost.

%%% Local Variables: 
%%% mode: latex
%%% TeX-master: "m"
%%% End: 

\section{Conclusions}
We have pursued an idea of adapting abstractions while searching
through the reachability space of a timed automaton. 
Our objective has been to obtain as low $LU$-bounds as possible
without sacrificing practicability of the approach.
In the end, the experimental results show that algorithm
\textsf{$\alu$,disabled} improves substantially the state-of-the art
algorithms for the reachability problem in timed automata.

At first sight, a more refined approach would be to work with
constraints themselves instead of $LU$-abstractions. Following the
pattern presented here, when encountering a disabled transition, one
could take a constraint that makes it disabled, and then propagate
this constraint backwards using, say, weakest precondition
operation. A major obstacle in implementing this approach is the
covering condition, like G3 in our case. When a node is covered, a
loop is formed in the abstract system. To ensure soundness, the
abstraction in a covered node should be an invariant of this loop. A
way out of this problem can be to consider a different covering
condition as proposed by McMillan~\cite{McMillan:CAV:2006}, but then this condition
requires to develop the abstract model much more than we do.  So from
this perspective we can see that $LU$-bounds are a very interesting
tool to get a loop invariant cheaply, and offer a good balance between
expressivity and algorithmic effectiveness.

We do not make any claim about optimality of our backward propagation
algorithm. For example,  one can see that it gives different results
depending on the order of treating the constraints. Even for a single
constraint, our algorithm is not optimal in a sense that there are examples
when we could obtain smaller $LU$-bounds. At present we do not know
if it is possible to compute optimal $LU$-bounds efficiently.
In our opinion though, it will be even more interesting to look at ways
of cleverly rearranging transitions of an automaton to limit bounds propagation
even further. 
Another promising improvement is to introduce some partial order
techniques, like parallelized interleaving from~\cite{Morbe:CAV:2011}.
We think that the propagation mechanisms presented here are well
adapted to such methods.

%%% Local Variables: 
%%% mode: latex
%%% TeX-master: "m"
%%% End: 

% LocalWords:  reachability

\bibliographystyle{alpha}
\bibliography{m}

\appendix

\newcommand{\coveredby}[0]{\ensuremath{\lhd}}
\newcommand{\propagatesto}[0]{\ensuremath{\leadsto}}

\section{Implementation of Algorithm \textsf{$\alu$,disabled}}
\label{sec:new-algorithm-detailed}

Algorithm~\ref{alg:uppaal} gives an overview of UPPAAL's algorithm. It
takes as input\footnote{The implementation builds the zone graph
  on-the-fly from a timed automaton taken as input.} a zone graph and
searches for a reachable accepting state. When a new node is expanded
(l.~11), it is first checked if it is covered by a visited node
(l.~15). If so, then it does not need to be explored. If not, all the nodes
that are covered by the new node are removed (l.~20-21) before the new
node is inserted to save memory and time.

In order to ensure the termination of the algorithm, the zones are
abstracted with an extrapolation operator
(e.g. $\ExtraLUp$\cite{Behrmann:STTT:2006}) that
guarantees a finite number of abstracted zones. The abstraction
parameters are clock bounds $LU$. They are obtained by a static analysis of
the timed automaton\cite{Behrmann:TACAS:2003}.

\begin{lstlisting}[mathescape=true,caption={UPPAAL's algorithm.},
label=alg:uppaal]
$P$ := $\emptyset$  // Passed list (visited nodes)
$W$ := $\emptyset$  // Waiting list  (W is included in P)

function main(): // input: zone graph $ZG$=($v_0$,$V$,$\xra{}$)
  insertPW($v_0$)
  while ($W$ is not empty) do
    pick a node $v$ from $W$
    if ($v.q$ is accepting)
      return ``not empty''
    for each transition $v \xra{} v'$ in $ZG$ do
      insertPW($v'$)
  return ``empty''

function insertPW($v$):
  if ($\exists v' \in P$ s.t. $v.q=v'.q$ and $v.Z \subseteq v'.Z$)
    // don't add v as it is covered by v'
    return
  else
    // remove all nodes v' covered by v
    for each $v' \in P$ s.t. $v'.q=v.q$ and $v'.Z \subseteq v.Z$ do
      remove $v'$ from $P$ and from $W$
    // insert v
    insert $v$ in $P$ and in $W$
\end{lstlisting}

Our algorithm \textsf{$\alu$,disabled} is built on top of UPPAAL's
algorithm. It is depicted in Algorithm~\ref{alg:alu-disabled}. The main
difference is that it computes dynamic $LU$-bounds that are used to
stop the exploration earlier. The dynamic bounds are used in l.~15. We
avoid exploring a node if it is covered by a visited node
w.r.t. dynamic bounds and abstraction $\alu$. If the node is not
covered, then its bounds are updated w.r.t. the transitions that are
disabled from that node (l.~21) and the node is explored (l.~24).

The algorithm computes an adaptive simulation graph $\propagatesto$
(see Definition~\ref{df:asg}) and a covering relation
$\coveredby$. The tentative nodes in Definition~\ref{df:asg} are the
nodes $v$ that are covered by some node $v'$, that is: $v \coveredby
v'$. The algorithm propagates the bounds and it updates
$\propagatesto$ and $\coveredby$ in order to maintain the invariants
in Definition~\ref{df:asg}. 

As the bounds are propagated over the graph $\propagatesto$, some
covering edge $v' \coveredby v$ may become invalid. This is checked in
line~50. When the bounds in $v'$ have to be updated from the bounds in
the covering node $v$, it is first checked if $v'$ is still
covered by $v$. If it is not the case, $v'$ is put in the list of
waiting nodes and it will be considered again later.

The propagation of clock bounds relies on function \texttt{newbounds}
given in Algorithm~\ref{fig:bounds}.

\begin{lstlisting}[mathescape=true,escapechar=£,caption={Algorithm
\textsf{$\alu$,disabled}.}, label=alg:alu-disabled]
// Assumptions: no lower bound atomic guards $d \lleq x$ in invariants
//              no atomic guard $x < 0$

$P$ := $\emptyset$  // Passed list (visited nodes)
$W$ := $\emptyset$  // Waiting list  (W is included in P)
$\coveredby$ := $\emptyset$ // Covering relation wrt dynamic bounds
$\propagatesto$ := $\emptyset$ // Propagation relation

function main(): // input: zone graph $ZG$=($v_0$,$V$,$\xra{}$)  
  insertPW($v_0$)
  while ($W$ is not empty) do
    pick a node $v$ from $W$
    if ($v.q$ is accepting)
      return ``not empty''
    if ($\exists v' \in (P \setminus W)$ uncovered st $v.q=v'.q$ and $v.Z \subseteq \ALU{v'.LU}(v'.Z)$)
      add $v \coveredby v'$ and $v' \propagatesto v$
      $v.LU$ := $v'.LU$
      $(X_L,X_U)$ := bounds modified during the copy
      propagate($v$, $X_L$, $X_U$)
    else
      $v.LU$ := disabled($v$)
      $(X_L,X_U)$ := active clocks in $v.LU$
      propagate($v$, $X_L$, $X_U$)
      for each transition $v \xra{} v'$ in $ZG$ do
        add $v' \propagatesto v$
        insertPW($v'$)
  return ``empty''

function insertPW($v$):
  if ($\exists v' \in P$ s.t. $v.q=v'.q$ and $v.Z \subseteq v'.Z$)
    // v is covered by v' wrt static bounds
    replace all $v \propagatesto v''$ by $v' \propagatesto v''$
  else
    // remove all nodes v' covered by v wrt static bounds
    for each $v' \in P$ s.t. $v'.q=v.q$ and $v'.Z \subseteq v.Z$ do
      remove $v'$ from $P$ and from $W$
      replace all $v' \propagatesto v''$ by $v \propagatesto v''$
      remove all $v'' \propagatesto v'$
      if ($\exists v'' \in P$ st $v' \coveredby v''$)
        remove $v' \coveredby v''$
      else
        for each $v'' \in P$ st $v'' \coveredby v'$ do
          remove $v'' \coveredby v'$
          insert $v''$ in $W$
    insert $v$ in $P$ and in $W$

function propagate($v$, $X_L$, $X_U$):
  for each $v'$ st $v \propagatesto v'$ do
    if ($v' \coveredby v$) // propagation due to a covering edge
      if ($v'.Z \subseteq \ALU{v.LU}(v.Z)$) // v' still covered by v
        $v'.LU$ := $v.LU$
        $(X'_L,X'_U)$ := bounds modified during the copy
      else // v' is not covered by v anymore
        $v'.LU$ := $x \mapsto -\infty$; $(X'_L,X'_U)$ := $(\emptyset, \emptyset)$
        insert $v'$ in $W$
    else // propagation due to a transition in ZG
      let $t$ be the transition $q' \xra{t} q$ that corresponds to $v \propagatesto v'$
      $(g_l,g_u,R)$ := decompose($t$)
      $(L_tU_t,X_L^t,X_U^t)$ := backwardLU($Z'$, $g_l$, $g_u$, $R$, $v.LU$, $X_L$, $X_u$)
      $v'.LU$ := max($v'.LU$, $L_tU_t$)
      $(X'_L,X'_U)$ := bounds modified by maximization
    if ($X'_L \neq \emptyset$ or $X'_U \neq \emptyset$)
      propagate($v'$, $X'_L$, $X'_U$)

function disabled($v$):
  $L$ := $x \mapsto -\infty$; $U$ := $x \mapsto -\infty$
  for each transition $t$ from $v.q$ that is disabled from $v.Z$ do
    $(g_l,g_u,R)$ := decompose($t$) // lower bounds,upper bounds,reset
    choose an atomic guard $w \lleq d$ in $g_u$ disabled from $v.Z \land g_l$
    $L_d$ := $x \mapsto -\infty$; $U_d$ := $w \mapsto d, \ x \mapsto
    -\infty \ (x \neq w)$
    $(L_tU_t, X_L, X_u)$ := backwardLU($v.Z$, $g_l$, $true$, $\emptyset$, $L_dU_d$, $\emptyset$, $\{x\}$)
    $LU$ := max($LU$, $L_tU_t$)
  return $LU$

function decompose($t$):
  let $t \, = \, (I,g,R,I')$ // src inv, guard, reset, tgt inv
  $g'$ := $g \land I$
  add to $g'$ all the atomic guard $x \lleq d$ from $I'$ st $x \not\in R$
  let $g'_l$ be the lower-bound atomic guards $d \lleq x$ in $g'$
  let $g'_u$ be the upper-bound atomic guards $x \lleq d$ in $g'$
  return $(g'_l,g'_u,R)$

function backwardLU($Z$, $g_l$, $g_u$, $R$, $LU$, $X_L$, $X_U$):
  let $\sigma$ := $Z \xra{g_l} Z' \xra{g_u;R} Z''$
  update $LU$, $X_L$ and $X_U$ applying £\texttt{newbounds}£ on $Z' \xra{g_u;R} Z''$
  update $LU$, $X_L$ and $X_U$ applying £\texttt{newbounds}£ on $Z \xra{g_l} Z'$
  return $(LU, X_L, X_u)$
\end{lstlisting}

%%% Local Variables: 
%%% mode: latex
%%% TeX-master: "m"
%%% End: 

% LocalWords:  UPPAAL's alg uppaal ZG insertPW invariants newbounds
% LocalWords:  alu wrt tU backwardLU dU src inv tgt

\end{document}